\providecommand{\U}[1]{\protect\rule{.1in}{.1in}}
\newtheorem{theorem}{Theorem}
\newtheorem{acknowledgement}[theorem]{Acknowledgement}
\newtheorem{definition}[theorem]{Definition}
\newtheorem{lemma}[theorem]{Lemma}
\newtheorem{proposition}[theorem]{Proposition}
\newtheorem{remark}[theorem]{Remark}
\newenvironment{proof}[1][Proof]{\noindent\textbf{#1.} }{\ \rule{0.5em}{0.5em}}
\begin{document}

\title{Nontrivial Periodic Minimizer for Landau-Brazovskii Model with Constraint}
\author{Yuanlong Ruan}
\maketitle

\begin{abstract}
We study the Landau-Brazovskii model with constraint, which is in the form of
a second order variational problem on the real line. By reducing to handy
situations, we find a nontrivial periodic minimal solution. Moreover, the
proof is kept as simple and self-contained as possible in our specific case.

\textbf{Keywords:} critical point; calculus of variation; Landau-Brazovskii;
minimal energy; minimizer;

\textbf{Mathematics Subject Classification (2000):} 49J99, 34B99

\end{abstract}

\section{Introduction}

\subsection{A digression on block copolymer}

Block copolymer, a synthesized polymer material, has found many applications
in industry. It is consisting of multiple sequences of monomer alternating in
series with different monomer blocks. The combination of different polymers
endows the polymer material with rich properties, which are the key to their
important applications. An example of such property manipulation can be seen
in poly(urethane) foams, which are used in bedding and upholstery.
Poly(urethane), a multiblock copolymer, is characterized by high-temperature
resilience and low-temperature flexibility. Another important use of block
copolymers is in industrial melt-adhesives. By combining polystyrene with
polymers which exhibit rubber-like and adhesive properties, sturdy adhesives
can be formed which are activated by heat. The structure of this copolymer
utilizes polystyrene blocks on the outside and the rubber block on the inside.
When heat is applied, the polystyrene parts melt and allow for limited
liquid-like flow. The middle section causes adhesion and after the temperature
drops, the strength of polystyrene is restored. This property, made possible
by the combination of polystyrene with other polymers, makes this block
copolymer an important adhesive.

The importance of block copolymer has drawn attention to mathematicians.
Various mathematical models are developed to explore the properties of block
copolymer. Landau-Brazovskii model, among the most popular models, is the
simplest yet desirable in capturing the nature of copolymer and thus gained
much acceptance. It will be our protagonist in this paper.

\subsection{Model specification}

Landau-Brazovskii is formulated as below,%
\begin{equation}
I_{\infty}\left(  \varpi\right)  =\int_{\mathbb{R}}\left\{  \frac{\xi^{2}}%
{2}\left[  \varpi^{\prime\prime}\left(  t\right)  +\varpi\left(  t\right)
\right]  ^{2}+\frac{\tau}{2}\varpi^{2}\left(  t\right)  -\frac{\gamma}%
{6}\varpi^{3}\left(  t\right)  +\frac{1}{24}\varpi^{4}\left(  t\right)
\right\}  dt, \label{model}%
\end{equation}
which models the energy of block copolymer in terms of specific controlling
parameters $\xi>0,$ $\tau,\gamma\in\mathbb{R}$.

Block copolymer is determined by its molecular arrangement. Each molecular
structure or state, by the language of the model, is represented by $\varpi$.
For the purpose of application, the structure or $\varpi$ is required to meet
certain practical restrictions,

\begin{enumerate}
\item[(i)] $\varpi\left(  t\right)  $ is a periodic function with period, say,
$\tau=\tau\left(  \varpi\right)  >0;$

\item[(ii)] The structure $\varpi\left(  t\right)  $ is evenly distributed,%
\[
\int_{0}^{\tau}\varpi\left(  t\right)  dt=0;
\]

\item[(iii)] The block copolymer should behave steadily around the state
$\varpi\left(  t\right)  $, to be precise, structure $\varpi\left(  t\right)
$ should not slide easily into another state when the controlling parameters
are slightly perturbed, therefore, it is necessary for $\varpi\left(
t\right)  $ to have the minimal energy in some sense.
\end{enumerate}

\subsection{Criterion for minimizer}

The model posed in the last subsection comes down to a minimizing problem with
constraints. But it is not quite the minimizing problem which can be solved
straightforwardly using the direct method in the calculus of variation. For
one thing, the energy functional is defined on the real line and the period of
$\varpi\left(  t\right)  $ is allowed to vary, hence the compactness of the
minimizing sequence is thus lossed. On the other hand, the criterion for
minimizer is not clearly, since there are functions for which $I_{\infty
}\left(  \varpi\right)  =-\infty$, in which case a strightforward minimization
does not make sense. Hence the criterion for minimizer needs to be
specifically understood. One way to overcome this difficulty is to minimize
the average energy rather than the energy itself. Specifically, consider the
problem%
\[
\liminf_{T\rightarrow+\infty}\frac{1}{T}I_{T}\left(  f;\varpi\right)
\rightarrow\min,
\]
to be noted, it will be shown in Proposition \ref{THequal} below that the
functional $\left(  \ref{model}\right)  $ on $\mathbb{R}$ may be replaced by
one on the half real line $\mathbb{R}^{+}$. But this optimal criterion is much
two loose for practical purposes, since it will never fail to find functions
which may have different behaviors on compacts but still reach the same
minimal average energy level. Another criterion for infinite horizon problems
founds its source in optimal control problem in economics, it was introduced
by \cite{DG} and \cite{vW}. It is referred to as "overtaking optimality
criterion". As \cite{L&M}, we shall adapt this optimal criterion, such a
minimizer will have minimal energy on every compact intervals and minimal
average energy on the whole real line, these will be made clear in definition
\ref{strong_m}. This specification of minimizer has the advantage that its
mathematical properties are physically desirable.

The Landau-Brazovskii model $\left(  \ref{model}\right)  $ has been employed
by many chemists and physists to simulate the block copolymer. However, few
literatures have been devoted to the exploration of mathematical nature of the
model. In fact, this is a difficult problem and needs in-depth investigation.
\cite{L&M} may be the first effort in this direction, the authors proved
existence of a global periodic minimizer of $\left(  \ref{model}\right)  $
without constaints by localizing. Later \cite{CMM} and \cite{M&Z2} studied the
cosntrained version of model $\left(  \ref{model}\right)  $. Both \cite{L&M}
and \cite{CMM} assume a general functional form. A.J.Zaslavski also extended
the studies of \cite{L&M}\cite{CMM} and made an investigation into the
structure and turpike properties of the optimal solutions, one may refer to
\cite{M&Z2} and references therein. In particular, $\left(  \ref{model}%
\right)  $ is related to a class of fourth order differential equations, for
further reference in this respect, the readers are sent to
\cite{RuanEFK0,RuanEFK1,RuanEFK2,PeletiertTroy}.

In \cite{Ruan}, the model $\left(  \ref{model}\right)  $ with symmetric double
well potential is studied and the existence of global periodic minimizer is
shown. The author also proved the symmetric property of the minimizer. In
addition, without the presence of symmetric property, a non-existence result
is given there. In this paper, we study the constrained version of
\cite{Ruan}'s model, but we do not assume any symmetric properties of $\left(
\ref{model}\right)  $. The proof is given following that of \cite{M&Z2},
however, we are interested in the existence of nontrivial periodic minimizer,
a sufficient condition for the existence is given. Since our model takes a
specific functional form, the proof can be as simple and self-contained as
possible. We shall see that the condition given in \cite{L&M}, which ensures
the existence of nontrivial perodic minimizer, is also the key to the
existence of nontrivial perodic minimizer of the contrained problem.

\subsection{Notations and preparations}

Before going further, we introduce some notions.%

\begin{equation}
f\left(  x,y,z\right)  =\frac{\xi^{2}}{2}z^{2}-\xi^{2}y^{2}+\frac{\xi^{2}%
+\tau}{2}x^{2}-\frac{\gamma}{6}x^{3}+\frac{1}{24}x^{4}, \label{intgrd}%
\end{equation}
the function is determined by controlling parameters $\xi,$ $\tau,$ $\gamma$.
Denote the energy on a bounded interval $\left[  T_{1},T_{2}\right]  $ by%

\[
I_{T_{1},T_{2}}\left(  f;\varpi\right)  =\int_{T_{1}}^{T_{2}}f\left(
\varpi,\varpi^{\prime},\varpi^{\prime\prime}\right)  dt,
\]
Note the integrand is independent of $t$, we have%
\[
I_{T_{1},T_{2}}\left(  f;\varpi\right)  =I_{0,T_{2}-T_{1}}\left(
f;\varpi\right)  .
\]

Therefore, for simplicity of notions, we always use $I_{T}\left(
f;\varpi\right)  $ to represent the integration on any bounded interval of
length $T>0$. \textit{This convention will be adopted as appropriate through
out in this paper}.

Corresponding to $I_{T}\left(  f;\varpi\right)  ,$ we denote by $J_{T}\left(
f;\varpi\right)  $ the average energy on bounded intervals of length as
$T>0,$
\[
J_{T}\left(  f;\varpi\right)  =\frac{1}{T}I_{T}\left(  f;\varpi\right)  .
\]

We have the minimization problem $\mathbb{P}_{T}\left(  f;x,y\right)  ,$%
\[
\zeta_{T}\left(  f;x,y\right)  =\inf\left\{  \left.  J_{T}\left(
f;\varpi\right)  \right\vert \varpi\in\mathcal{A}_{T}\left(  x,y\right)
\right\}  ,
\]
where%
\[
\mathcal{A}_{T}\left(  x,y\right)  =\left\{  \left.  \varpi\in W^{2,1}\left(
0,T\right)  \right\vert \mathcal{V}_{\varpi}\left(  0\right)  =x,\mathcal{V}%
_{\varpi}\left(  T\right)  =y\right\}  ,
\]%
\[
\mathcal{V}_{\varpi}\left(  s\right)  =\left(  \varpi\left(  s\right)
,\varpi^{\prime}\left(  s\right)  \right)  \in\mathbb{R}^{2}.
\]
Note that $\mathcal{V}_{\varpi}\left(  s\right)  $ denotes the vector formed
by the value of $\varpi$ and its derivative at the point $s.$ The notion will
be used frequently.%

\[
E=W_{loc}^{2,1}\left(  \mathbb{R}^{+}\right)  \cap W^{1,\infty}\left(
\mathbb{R}^{+}\right)
\]

Minimization problem $\mathbb{P}^{+}$ on infinite interval $\mathbb{R}^{+}$ is
denoted by,%
\[
\psi_{f}^{+}=\inf\left\{  \liminf\limits_{T\rightarrow\infty}J_{T}\left(
f;\varpi\right)  \left\vert \varpi\in E\right.  \right\}  ,
\]
$\Theta\left(  f\right)  $ is the set of minimizers and $\tilde{\Theta}\left(
f\right)  $ is set of periodic minimizers.

Minimization problem with constraint $\mathbb{P}_{T}\left(  f,a;x,y\right)  $
on finite interval%
\[
\zeta_{T}\left(  f,a;x,y\right)  =\inf\left\{  \left.  J_{T}\left(
f;\varpi\right)  \right\vert \varpi\in\mathcal{A}_{T}\left(  a;x,y\right)
\right\}  ,
\]
where%
\[
\mathcal{A}_{T}\left(  a;x,y\right)  =\left\{  \left.  \varpi\in
W^{2,1}\left(  0,T\right)  \right\vert \left[  \varpi\right]  _{T}%
=a,\mathcal{V}_{\varpi}\left(  0\right)  =x,\mathcal{V}_{\varpi}\left(
T\right)  =y\right\}  ,
\]%
\[
\left[  \varpi\right]  _{T}=\frac{1}{T}\int_{0}^{T}\varpi\left(  t\right)
dt.
\]

Minimization problem on infinite interval $\mathbb{R}^{+}$ with constraint
$\mathbb{P}^{+}\left(  f;a\right)  $ is denoted by%
\begin{equation}
\psi_{f}^{+}\left(  a\right)  =\inf\left\{  \liminf\limits_{T\rightarrow
\infty}J_{T}\left(  f;\varpi\right)  \left\vert \varpi\in E,\left[
\varpi\right]  =a\right.  \right\}  , \label{constr_min}%
\end{equation}
where%
\[
\left[  \varpi\right]  =\liminf\limits_{T\rightarrow\infty}\left[
\varpi\right]  _{T}=\liminf\limits_{T\rightarrow\infty}\frac{1}{T}\int_{0}%
^{T}\varpi\left(  t\right)  dt.
\]
$\Theta\left(  f;a\right)  $ is the set of minimizers and $\tilde{\Theta
}\left(  f;a\right)  $ is set of periodic minimizers.

\begin{remark}
\label{exists}It is shown in \cite{CMM} that, for any $\varpi$ in the domain
of $\mathbb{P}\left(  f;a\right)  ,$ there is $\tilde{\varpi}\in E$ such that
$\lim\limits_{T\rightarrow\infty}J_{T}\left(  f;\tilde{\varpi}\right)
,\lim\limits_{T\rightarrow\infty}\left[  \tilde{\varpi}\right]  _{T}$ exist
and%
\[
\lim\limits_{T\rightarrow\infty}J_{T}\left(  f;\tilde{\varpi}\right)
=\liminf\limits_{T\rightarrow\infty}J_{T}\left(  f;\varpi\right)  ,
\]%
\[
\left[  \tilde{\varpi}\right]  =\lim\limits_{T\rightarrow\infty}\left[
\tilde{\varpi}\right]  _{T}=\left[  \varpi\right]  .
\]
In view of this observation, we may replace $\mathbb{P}\left(  f\right)  $
(respectively $\mathbb{P}\left(  f;a\right)  $) with the following%
\[
\psi_{f}^{+}=\inf\left\{  \lim\limits_{T\rightarrow\infty}J_{T}\left(
f;\varpi\right)  \left\vert \varpi\in E,\text{and }\lim\limits_{T\rightarrow
\infty}J_{T}\left(  f;\varpi\right)  \text{ exists}\right.  \right\}  .
\]
(respectively,%
\[
\psi_{f}^{+}\left(  a\right)  =\inf\left\{  \lim\limits_{T\rightarrow\infty
}J_{T}\left(  f;\varpi\right)  \left\vert \varpi\in E,\text{and }%
\lim\limits_{T\rightarrow\infty}J_{T}\left(  f;\varpi\right)  ,\text{ }\left[
\varpi\right]  \text{ exist, }\left[  \varpi\right]  =a\right.  \right\}  .
\]
)
\end{remark}

\begin{definition}
Assume $\left\{  \varpi_{n}\right\}  _{n=0}^{\infty}$ is a sequence of
functions in $\mathbb{P}\left(  f,a\right)  $ or $\mathbb{P}\left(  f\right)
$, $\left\{  A_{n}\right\}  _{n=0}^{\infty}$ is any sequence of posotive
values and $\left\{  k_{n}\right\}  _{n=0}^{\infty}$ of positive integers,
define%
\[
\alpha_{0}=\frac{A_{0}}{2},\alpha_{n}=k_{n}A_{n}+\alpha_{n-1},n\geqslant0.
\]%
\[
J_{n}=\left[  0,A_{n}\right]  ,n\geqslant0.
\]%
\[
L_{-1}=\left[  -\frac{A_{0}}{2},\frac{A_{0}}{2}\right]  ,L_{n}=\left(
\alpha_{n-1},\alpha_{n}\right]  ,n\geqslant0.
\]%
\[
\hat{\varpi}_{n}\left(  x\right)  =\varpi_{n}\left(  l_{n}\left(
x-\alpha_{n-1}\right)  \right)  ,
\]%
\[
l_{n}\left(  x\right)  =x-\left[  \frac{x}{A_{n}}\right]  A_{n},
\]
where $l_{n}\left(  x\right)  $ maps $\mathbb{R}$ into $J_{n},$ the functions
$\hat{\varpi}$ is called a mixture of $\left(  \left\{  \varpi_{n}\right\}
,\left\{  A_{n}\right\}  ,\left\{  k_{n}\right\}  \right)  $, precisely,%
\[
\hat{\varpi}=Mix\left(  \left\{  \varpi_{n}\right\}  ,\left\{  A_{n}\right\}
,\left\{  k_{n}\right\}  \right)  ,
\]%
\[
\hat{\varpi}\left(  x\right)  =\left\{
\begin{array}
[c]{ll}%
\hat{\varpi}_{n}\left(  x\right)  & x\in L_{n},n\geqslant0,\\
\hat{\varpi}_{0}\left(  x\right)  & x\in L_{-1},\\
\hat{\varpi}_{n}\left(  x+\alpha_{n}+\alpha_{n-1}\right)  & x\in\left(
-\alpha_{n},-\alpha_{n-1}\right]  ,n\geqslant1
\end{array}
\right.  .
\]

\end{definition}

This method of mixture was introduced by \cite{CMM}, the conclusion in remark
\ref{exists} is proved by the method of mixture. Similarly, the method of
mixture is also employed to prove (\cite{M&Z2})%
\begin{equation}
\psi_{f}=\psi_{f}^{+},\text{ }\psi_{f}\left(  a\right)  =\psi_{f}^{+}\left(
a\right)  , \label{equal}%
\end{equation}
where $\psi_{f}^{+},$ $\psi_{f}^{+}\left(  a\right)  $ respectively are the
minimmum for unconstrained and constrained minimization problem on
$\mathbb{R}^{+}$, $\ $and $\psi_{f},$ $\psi_{f}\left(  a\right)  $ are those
on $\mathbb{R}$. Since in \cite{M&Z2}, the proof for \ref{equal} is not
provided, we give it here in detail.

\begin{proposition}
\label{THequal}For all $a\in\mathbb{R}$, $\psi_{f}\left(  a\right)  =\psi
_{f}^{+}\left(  a\right)  $. We also have $\psi_{f}=\psi_{f}^{+}$.
\end{proposition}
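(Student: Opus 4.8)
The plan is to prove the two inequalities $\psi_f(a)\le\psi_f^+(a)$ and $\psi_f^+(a)\le\psi_f(a)$ separately, the unconstrained identity $\psi_f=\psi_f^+$ then following verbatim by deleting every reference to the mean $[\,\cdot\,]$. By Remark \ref{exists} I may restrict attention, on either side, to admissible functions $\varpi$ for which $\lim_{T\to\infty}J_T(f;\varpi)$ and $\lim_{T\to\infty}[\varpi]_T$ exist, so I freely read ``$\liminf$'' as ``$\lim$''. Two elementary facts are used repeatedly: the integrand $f(x,y,z)$ is \emph{even in} $y$, so that under the reflection $t\mapsto-t$ the density $f(\varpi,\varpi',\varpi'')$ is unchanged at the reflected point ($\varpi'$ changes sign, $\varpi''$ does not); and modifying $\varpi$ on a fixed compact set subject to a uniform pointwise bound alters neither $\lim_T J_T(f;\varpi)$ nor $[\varpi]$, since the change in every numerator $\int_0^T$ is $O(1)$ and is killed by $T\to\infty$.

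For $\psi_f(a)\le\psi_f^+(a)$ I would argue by reflection. Let $\varpi\in E$ satisfy $\lim_T J_T(f;\varpi)=\ell<\psi_f^+(a)+\varepsilon$ and $[\varpi]=a$. Modifying $\varpi$ on $[0,1]$ (harmless by the above) we may assume $\varpi'(0)=0$; then the even extension $\widetilde\varpi(t):=\varpi(|t|)$ lies in $W^{2,1}_{loc}(\mathbb{R})\cap W^{1,\infty}(\mathbb{R})$. Because $f$ is even in $y$, for every $L>0$ the symmetric average of $f(\widetilde\varpi,\widetilde\varpi',\widetilde\varpi'')$ over $[-L,L]$ equals $\frac1L\int_0^L f(\varpi,\varpi',\varpi'')$ and the symmetric average of $\widetilde\varpi$ over $[-L,L]$ equals $\frac1L\int_0^L\varpi$; letting $L\to\infty$ shows $\widetilde\varpi$ has average energy $\ell$ and mean $a$, whence $\psi_f(a)\le\ell<\psi_f^+(a)+\varepsilon$. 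Sending $\varepsilon\downarrow0$ settles this direction.

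The substantive direction is $\psi_f^+(a)\le\psi_f(a)$; here a plain restriction of a two-sided near-minimizer to $\mathbb{R}^+$ need not preserve the average energy or the mean, so the mixture construction is needed. Fix a minimizing sequence $\varpi_k$ for $\mathbb{P}(f;a)$ with $\ell_k:=\lim_L\frac1{2L}\int_{-L}^{L}f(\varpi_k,\varpi_k',\varpi_k'')\to\psi_f(a)$ and $\lim_L\frac1{2L}\int_{-L}^{L}\varpi_k=a$. For fixed $k$ and large $L$, translate $\varpi_k|_{[-L,L]}$ onto an interval of length $2L$ and attach at both ends bridges of bounded length, staying within the $W^{1,\infty}$ bound of $\varpi_k$, so that all the resulting functions $w_{k,L}$ begin and end with a common pair of boundary values for $(\varpi,\varpi')$; each $w_{k,L}$ then extends periodically, its energy average and its mean over one period being $\ell_k+o(1)$ and $a+o(1)$ as $L\to\infty$ (each bridge contributes $O(1)$ to a numerator whose denominator, the period, tends to infinity). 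Choose $L_n\nearrow\infty$ so that both errors are $\le1/n$, let $P_n$ be the period of $w_{k,L_n}$, and form $\widehat\varpi^{(k)}=Mix\big(\{w_{k,L_n}\}_n,\{P_n\}_n,\{k_n\}_n\big)$ with the multiplicities $k_n$ fixed in the next step.

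I would then show, by induction on the block index $m$ and by enlarging $k_m$ at the $m$-th step, that for every $T\in(\alpha_{m-1},\alpha_m]$ the average $J_T(f;\widehat\varpi^{(k)})$ stays within $O(1/m)$ of $\ell_k$ and $[\widehat\varpi^{(k)}]_T$ stays within $O(1/m)$ of $a$: on the $m$-th block $\widehat\varpi^{(k)}$ is $k_m$ consecutive periods of $w_{k,L_m}$, each contributing energy average $\ell_k+O(1/m)$ and mean $a+O(1/m)$, so once $k_mP_m$ dwarfs the total length $\alpha_{m-1}$ of the earlier blocks the latter's contribution (bounded by $\alpha_{m-1}$ times a constant) is negligible, while for $T$ barely past $\alpha_{m-1}$ the average is close to $J_{\alpha_{m-1}}(f;\widehat\varpi^{(k)})$, controlled by the inductive hypothesis. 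Granting this, $\widehat\varpi^{(k)}\in E$ with $[\widehat\varpi^{(k)}]=a$ and $\lim_T J_T(f;\widehat\varpi^{(k)})=\ell_k$, so $\psi_f^+(a)\le\ell_k$; letting $k\to\infty$ gives $\psi_f^+(a)\le\psi_f(a)$. The main obstacle is exactly this last bookkeeping: choosing the $k_n$ so that \emph{all} partial averages $J_T$, not merely those at the block ends $T=\alpha_m$, are pinned near $\ell_k$ while the running mean $[\widehat\varpi^{(k)}]_T$ is simultaneously pinned near $a$, both tracked through one induction — the same computation underlying Remark \ref{exists} and \eqref{equal} in \cite{CMM,M&Z2}. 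Everything else, the reflection step and the closing-up of each piece by bounded bridges, is routine given the pointwise bound available on each individual near-minimizer.
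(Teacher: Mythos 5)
Your argument is correct, and it reaches the two inequalities by a partly different route than the paper. For the direction that passes from a near-minimizer on $\mathbb{R}$ to one on $\mathbb{R}^{+}$ (your $\psi_f^{+}(a)\leqslant\psi_f(a)$) you do essentially what the paper does: truncate to $[-L,L]$, close up with bounded bridges (the paper uses cubic polynomials forcing $\varpi=\varpi'=0$ at the new endpoints), periodize, and mix with multiplicities $k_n$ chosen so that the partial averages of both the energy and the mean stabilize; your explicit induction over blocks, tracking \emph{all} intermediate times $T\in(\alpha_{m-1},\alpha_m]$ and the running mean $[\widehat\varpi]_T$, is in fact more careful than the paper, which only records the conditions $\alpha_{n+1}/\sum_{l\leqslant n}k_l\alpha_l\rightarrow0$ and $\beta_{n+1}/\sum_{l\leqslant n}k_l\beta_l\rightarrow0$ and asserts the conclusion (and whose displayed computation never actually tracks the constraint $[\varpi]=a$, despite claiming the constrained case). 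Where you genuinely diverge is the opposite direction: the paper runs the same mixture machinery a second time to turn a half-line near-minimizer into a two-sided one, whereas you simply take the even reflection $\widetilde\varpi(t)=\varpi(|t|)$ after normalizing $\varpi'(0)=0$ on a compact set. This exploits the specific form \eqref{intgrd}, which depends on $y$ only through $y^2$, so the reflected function has the same symmetric energy averages and the same mean; it is shorter and avoids any choice of multiplicities, at the cost of not generalizing to integrands that are not even in $y$ (the paper's mixture argument would). You are also more careful than the paper in working with a minimizing sequence $\varpi_k$ and passing $k\to\infty$ at the end, rather than assuming outright that an optimal $u$ attaining $\psi_f$ exists. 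I see no gap; the only point worth making explicit in a final write-up is that the bridge energies are bounded uniformly in $L$ because the boundary data $\mathcal{V}_{\varpi_k}(\pm L)$ lie in a fixed compact set thanks to $\varpi_k\in W^{1,\infty}$.
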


\begin{proof}
We only prove the first equality, the other being similar. Let $u\left(
t\right)  $ be an optimal solution to the problem on the whole real line,%
\[
\lim\limits_{T\rightarrow+\infty}\frac{1}{2T}\int_{-T}^{T}f\left(
u,u^{\prime},u^{\prime\prime}\right)  dt=\psi_{f}%
\]
Let $\alpha_{n}$ be an increasing sequence of positive numbers,
\[
\lim\limits_{n\rightarrow+\infty}\frac{1}{2\alpha_{n}}\int_{-T_{n}}%
^{\alpha_{n}}f\left(  u,u^{\prime},u^{\prime\prime}\right)  dt=\psi_{f}.
\]
Denote%
\[
\beta_{n}=\frac{1}{2\alpha_{n}}\int_{-\alpha_{n}}^{\alpha_{n}}\left\vert
f\left(  u,u^{\prime},u^{\prime\prime}\right)  \right\vert dt,
\]
let $k_{n}$ be such that%
\[
\frac{\alpha_{n+1}}{\sum\limits_{l=0}^{n}k_{l}\alpha_{l}}\rightarrow
0,\frac{\beta_{n+1}}{\sum\limits_{l=0}^{n}k_{l}\beta_{l}}\rightarrow0
\]%
\[
A_{m}=2\left(  \alpha_{n}+1\right)
\]%
\[
v_{m}\left(  t\right)  =\left\{
\begin{array}
[c]{cc}%
u\left(  t\right)  & t\in\left[  -\alpha_{n},\alpha_{n}\right] \\
u\left(  t\right)  =u^{\prime}\left(  t\right)  =0 & t=\pm\left(  \alpha
_{n}+1\right)
\end{array}
,\right.
\]
$v_{m}\left(  t\right)  $ is a polynomial of degree $3$ on $\left[
-\alpha_{n}-1,-\alpha_{n}\right)  $ and on $\left(  \alpha_{n},\alpha
_{n}+1\right)  .$%
\[
v=Mix\left(  \left\{  v_{m}\right\}  ,\left\{  A_{m}\right\}  ,\left\{
k_{m}\right\}  \right)  ^{+}.
\]
Then%
\[
\lim\limits_{T\rightarrow+\infty}\frac{1}{T}\int_{0}^{T}f\left(  v,v^{\prime
},v^{\prime\prime}\right)  dt=\psi_{f},
\]
hence%
\[
\psi_{f}\geqslant\psi_{f}^{+}.
\]
For the converse, Let $u\left(  t\right)  $ be an optimal solution to the
problem on the positive real line,%
\[
\lim\limits_{T\rightarrow+\infty}\frac{1}{T}\int_{0}^{T}f\left(  u,u^{\prime
},u^{\prime\prime}\right)  dt=\psi_{f}^{+},
\]
Denote%
\[
\beta_{n}=\frac{1}{2\alpha_{n}}\int_{0}^{2\alpha_{n}}\left\vert f\left(
u,u^{\prime},u^{\prime\prime}\right)  \right\vert dt,
\]
let $k_{n}$ be such that%
\[
\frac{\alpha_{n+1}}{\sum\limits_{l=0}^{n}k_{l}\alpha_{l}}\rightarrow
0,\frac{\beta_{n+1}}{\sum\limits_{l=0}^{n}k_{l}\beta_{l}}\rightarrow0
\]%
\[
A_{m}=2\left(  \alpha_{n}+1\right)
\]%
\[
v_{m}\left(  t\right)  =\left\{
\begin{array}
[c]{cc}%
u\left(  t\right)  & t\in\left[  -\alpha_{n},\alpha_{n}\right] \\
u\left(  t\right)  =u^{\prime}\left(  t\right)  =0 & t=\pm\left(  \alpha
_{n}+1\right)
\end{array}
,\right.
\]
$v_{m}\left(  t\right)  $ is a polynomial of degree $3$ on $\left[
-\alpha_{n}-1,-\alpha_{n}\right)  $ and on $\left(  \alpha_{n},\alpha
_{n}+1\right)  .$%
\[
v=Mix\left(  \left\{  v_{m}\right\}  ,\left\{  A_{m}\right\}  ,\left\{
k_{m}\right\}  \right)  .
\]
Then%
\[
\lim\limits_{T\rightarrow+\infty}\frac{1}{2T}\int_{-T}^{T}f\left(
v,v^{\prime},v^{\prime\prime}\right)  dt=\psi_{f}^{+},
\]
hence%
\[
\psi_{f}\leqslant\psi_{f}^{+}.
\]

\end{proof}

\begin{remark}
In view of this proposition, we need only to consider the problem on the
positive half real line. In the following, we shall abandon the notion
$\psi_{f}^{+}$ (resp. $\psi_{f}^{+}\left(  a\right)  $) and use $\psi_{f}$
(resp. $\psi_{f}\left(  a\right)  $) to indicate the minimizing problems in question.
\end{remark}

\begin{definition}
The differential $\partial\vartheta\left(  x^{\ast}\right)  $ of a convex
function $\vartheta\left(  x\right)  $ at $x=x^{\ast}$ is defined as the set%
\[
\left\{  \lambda\in\mathbb{R}:\vartheta\left(  z\right)  \geqslant
\vartheta\left(  x^{\ast}\right)  +\lambda\left(  z-x^{\ast}\right)  ,\forall
z\in\mathbb{R}\right\}  .
\]
If there exists $\lambda\in\mathbb{R}$ such that,%
\[
\vartheta\left(  z\right)  >\vartheta\left(  \bar{x}\right)  +\lambda\left(
z-\bar{x}\right)  ,\forall z\neq\bar{x},
\]
then $\bar{x}$ is called the exposed point of $\vartheta.$
\end{definition}

\section{Main result and open problem}

We consider only the case of zero mean constraint (i.e., $\left[
\varpi\right]  =0$), other case being similar.

Recall that $e$ is an extreme point of a convex set $\mathcal{K}$, if $e$ does
not belong to the segment (excluding the end points) connecting any two points
$e_{1},e_{2}\in\mathcal{K}$. The extreme point to a convex function is defined
as extreme point to its graph.

\begin{theorem}
\label{RCP}Let $f$ be the energy configuration $\left(  \ref{intgrd}\right)  $
determined by parameters $\xi>0,$ $\tau,\gamma\in\mathbb{R}$ such that $0$ is
an exposed point of $\psi_{f}\left(  x\right)  $, and%
\begin{equation}
\psi_{f}\left(  0\right)  <m_{f}=\inf\left\{  f\left(  t,0,0\right)
\left\vert t\in\mathbb{R}\right.  \right\}  . \label{inequ_rcp}%
\end{equation}
then there is a nontrivial periodic solution to the constrained minimization
problem $\mathbb{P}\left(  f;0\right)  $.
\end{theorem}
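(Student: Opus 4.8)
The plan is to reduce the constrained problem $\mathbb{P}(f;0)$ to an \emph{unconstrained} Landau--Brazovskii problem and then to invoke the existence mechanism of \cite{L&M}. Since $0$ is an exposed point of the convex function $x\mapsto\psi_f(x)$, there is $\lambda\in\mathbb{R}$ with
\[
\psi_f(x)>\psi_f(0)+\lambda x\qquad\text{for all }x\neq0,
\]
so in particular $\lambda\in\partial\psi_f(0)$. Set $g(x,y,z)=f(x,y,z)-\lambda x$, which differs from $(\ref{intgrd})$ only by the harmless linear term $-\lambda x$; for every $\varpi$ for which $\lim_{T\rightarrow\infty}J_T(f;\varpi)$ and $[\varpi]$ exist one has $\lim_{T\rightarrow\infty}J_T(g;\varpi)=\lim_{T\rightarrow\infty}J_T(f;\varpi)-\lambda[\varpi]$. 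Using this identity together with the subgradient inequality $\psi_f(a)-\lambda a\geqslant\psi_f(0)$ and Remark \ref{exists}, one checks that $\psi_g=\psi_f(0)$, that every minimizer of $\mathbb{P}(g)$ has mean value $0$, and conversely that the minimizers — periodic or not — of $\mathbb{P}(f;0)$ are exactly those of $\mathbb{P}(g)$.

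Next I would verify that $g$ satisfies the hypothesis of \cite{L&M}, namely $\psi_g<m_g:=\inf\{g(t,0,0):t\in\mathbb{R}\}$. For $t\neq0$ the constant function $\varpi\equiv t$ is admissible for $\mathbb{P}(f;t)$, so $f(t,0,0)\geqslant\psi_f(t)$, whence $g(t,0,0)=f(t,0,0)-\lambda t\geqslant\psi_f(t)-\lambda t>\psi_f(0)=\psi_g$ by exposedness. For $t=0$ we have $g(0,0,0)=f(0,0,0)=0$, while $(\ref{inequ_rcp})$ gives $\psi_g=\psi_f(0)<m_f\leqslant f(0,0,0)=0$. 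Since $g(t,0,0)$ is a quartic polynomial in $t$ with leading coefficient $\tfrac{1}{24}>0$, the infimum $m_g$ is attained, and the two cases above show $m_g>\psi_g$.

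With this strict inequality in hand, the existence argument of \cite{L&M} (whose setting covers $g$) yields a \emph{nontrivial}, i.e.\ non-constant, periodic minimizer $v$ of the unconstrained problem $\mathbb{P}(g)$. Because $f$ is the explicit polynomial $(\ref{intgrd})$, this step can also be done by hand, keeping the paper self-contained: the inequality $\psi_g<m_g$ forbids the (free) period from escaping to $+\infty$ — a competitor with very long period spends most of its length near an equilibrium and hence has average energy close to $m_g>\psi_g$ — while the term $\tfrac{\xi^2}{2}(\varpi''+\varpi)^2$ together with the quartic in $\varpi$ provides $W^{2,2}$-coercivity on one period, and the sole non-convex term $-\xi^2(\varpi')^2$ is continuous for the weak $W^{2,2}$ topology by the compact embedding $W^{2,2}\hookrightarrow W^{1,2}$ on a bounded interval; the one-period energy is therefore weakly lower semicontinuous and attains its infimum, and $\psi_g<m_g$ forces that minimizer to be non-constant.

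It remains to transfer $v$ to the constrained problem. Being periodic, $v$ has a well-defined mean $a:=[v]$, and $\lim_{T\rightarrow\infty}J_T(g;v)=\psi_g=\psi_f(0)$ gives $\lim_{T\rightarrow\infty}J_T(f;v)=\psi_f(0)+\lambda a$. Since $v$ is admissible for $\mathbb{P}(f;a)$ we get $\psi_f(a)\leqslant\psi_f(0)+\lambda a$, which combined with the subgradient inequality $\psi_f(a)\geqslant\psi_f(0)+\lambda a$ forces equality; because $0$ is an exposed point of $\psi_f$, this is possible only for $a=0$. Hence $v\in E$ is periodic and non-constant with $[v]=0$ and $\lim_{T\rightarrow\infty}J_T(f;v)=\psi_f(0)$, i.e.\ $v$ is a nontrivial periodic solution of $\mathbb{P}(f;0)$. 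I expect the real difficulty to be the self-contained execution of the third step — controlling the free period and proving lower semicontinuity in the presence of the concave term $-\xi^2(\varpi')^2$; the other steps are convex-duality bookkeeping built on Remark \ref{exists} and Proposition \ref{THequal}.
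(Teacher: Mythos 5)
Your argument is correct, but it is a genuinely different route from the paper's. You exploit the exposedness of $0$ head-on: picking $\lambda\in\partial\psi_{f}\left(  0\right)  $ with a strict supporting line, you reduce $\mathbb{P}\left(  f;0\right)  $ to the unconstrained Lagrangian problem $\mathbb{P}\left(  f_{\lambda}\right)  $ (this is exactly relation $\left(  \ref{rel3}\right)  $ of Lemma \ref{convexity}), verify the Leizarowitz--Mizel non-constancy criterion $\psi_{f_{\lambda}}<\inf_{t}f_{\lambda}\left(  t,0,0\right)  $ from $\left(  \ref{inequ_rcp}\right)  $ and exposedness, invoke the nonemptiness of $\tilde{\Xi}\left(  f_{\lambda}\right)  $ (the paper's \cite{Lwz}/\cite{L&M} input, which the paper itself uses for each $f_{\lambda_{n}}$), and transfer the resulting non-constant periodic minimizer back, with exposedness forcing its mean to be $0$. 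The paper instead takes a sequence of exposed points $\theta_{n}\rightarrow0$ with $\theta_{n}\neq0$, obtains periodic minimizers $\tilde{\varpi}_{n}$ of $\mathbb{P}\left(  f;\theta_{n}\right)  $ with minimal periods $\tau_{n}$, and spends essentially all of Section 4 showing $\tau_{n}$ is bounded away from $0$ and $\infty$ (via the limit-profile argument built on Lemmas \ref{structure_per} and \ref{perfect_m}) before extracting a limit. The trade-off: your proof is much shorter and leans on exposedness of $0$ itself in two essential places, so it would not survive weakening the hypothesis to $0$ being merely an extreme point; the paper's proof only uses that $0$ is approximable by exposed points (indeed its Section 4 opens by assuming $0$ is an extreme point), which is presumably why the heavier compactness-of-periods machinery is deployed. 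Two minor caveats: your parenthetical claim that \emph{every} minimizer of $\mathbb{P}\left(  f_{\lambda}\right)  $ has mean $0$ needs the existence of $\lim_{T}\left[  \varpi\right]  _{T}$, which is only guaranteed after the mixture normalization of Remark \ref{exists} --- but you only actually use it for the periodic $v$, where it is automatic; and your self-contained direct-method sketch for producing the periodic minimizer of $\mathbb{P}\left(  f_{\lambda}\right)  $ (controlling the free period) is the one step that is genuinely thin, though since your primary route goes through the cited existence results --- the same ones the paper relies on --- this is not a gap in the proof of the stated theorem.
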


Before diving into the proof of the theorem, we would like to put down a few
remarks which we formulated into the following open problems.

\textbf{Open problems}

The model $\left(  \ref{model}\right)  $ has been put forward to help simulate
copolymer, thus it is expected to have certain conditions for the existence of
periodic minimizer that are easy to verify. The result of Theorem \ref{RCP}
may be succinct itself. However, being exposed point is an property that is
difficult to validate both from theoretical and numerical perspective. Our
problem is whether we can find an alternative condition for being an exposed
point. For this, we have the following conjecture.

Let $h\left(  x\right)  $ be any potential function with double well (not
necessarily symmetric), the potential term in $\left(  \ref{model}\right)  $
is an example of such $h\left(  x\right)  $,
\[
h\left(  x\right)  =\frac{\tau}{2}x^{2}-\frac{\gamma}{6}x^{3}+\frac{1}%
{24}x^{4},
\]
Condsider the functional%
\[
I_{T}\left(  \varpi\right)  =\int_{0}^{T}\frac{\xi^{2}}{2}\left[
\varpi^{\prime\prime}\left(  t\right)  +\varpi\left(  t\right)  \right]
^{2}+h\left(  \varpi\left(  t\right)  \right)  dt,
\]
where $\varpi\in E$ is periodic in $t$ (the period is different from $\varpi$
to $\varpi$), recalling%
\[
E=W_{loc}^{2,1}\left(  \mathbb{R}^{+}\right)  \cap W^{1,\infty}\left(
\mathbb{R}^{+}\right)  .
\]
If we write%
\[
h^{\ast}\left(  x\right)  =\frac{\xi^{2}}{2}x^{2}+h\left(  x\right)
\]
and denote by $\bar{h}^{\ast}\left(  x\right)  $ the convex hull of $h^{\ast
}\left(  x\right)  .$Then is it true that
\[
\psi_{f}^{+}\left(  a\right)  =\bar{h}^{\ast}\left(  a\right)  ,
\]
where $\psi_{f}^{+}\left(  a\right)  $ is similarly defined as%
\[
\psi_{f}^{+}\left(  a\right)  =\inf\left\{  \liminf\limits_{T\rightarrow
\infty}\frac{1}{T}I_{T}\left(  \varpi\right)  \left\vert \varpi\in E,\left[
\varpi\right]  =a\right.  \right\}
\]
and%
\[
\left[  \varpi\right]  =\liminf\limits_{T\rightarrow+\infty}\frac{1}{T}%
\int_{0}^{T}\varpi\left(  t\right)  dt.
\]
If the above conjecture is true, then the condition of exposed point can be
dropped. If it is not true, then are there any other alternative conditions
for being an exposed point which are easy to verify ?

\section{Preliminaries}

\begin{theorem}
The problem on the entire real line has the same minimum as that on the harf
real line. Therefore, it is enough to consider the problem on the half line.
\end{theorem}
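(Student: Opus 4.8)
The plan is simply to invoke Proposition \ref{THequal}: the statement here is a verbatim rephrasing of $\psi_f=\psi_f^{+}$ and $\psi_f(a)=\psi_f^{+}(a)$, which has already been established above. For completeness I would recall the structure of that argument, since the same device --- the method of mixture of \cite{CMM} --- is the workhorse for the constrained reductions that follow. The equality is obtained from the two one-sided inequalities, each proved by the same construction.

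For $\psi_f\ge\psi_f^{+}$ one starts from a solution $u$ realizing the two-sided average, $\frac{1}{2\alpha_n}\int_{-\alpha_n}^{\alpha_n} f(u,u',u'')\,dt\to\psi_f$ along some $\alpha_n\uparrow\infty$, truncates $u$ to $[-\alpha_n,\alpha_n]$, and glues on short cubic \emph{bridges} on $[\alpha_n,\alpha_n+1]$ and $[-\alpha_n-1,-\alpha_n]$ chosen so that the resulting period-$A_m$ block $v_m$ (with $A_m=2(\alpha_n+1)$) satisfies $v_m=v_m'=0$ at $\pm(\alpha_n+1)$. One then mixes, $v=Mix\left(\{v_m\},\{A_m\},\{k_m\}\right)^{+}$, repeating the $n$-th block $k_n$ times, with $k_n$ chosen to grow fast enough that
\[
\frac{\alpha_{n+1}}{\sum_{l=0}^{n}k_l\alpha_l}\to 0,\qquad
\frac{\beta_{n+1}}{\sum_{l=0}^{n}k_l\beta_l}\to 0,\qquad
\beta_n=\frac{1}{2\alpha_n}\int_{-\alpha_n}^{\alpha_n}\bigl|f(u,u',u'')\bigr|\,dt.
\]
These two smallness conditions ensure that, over any long window $[0,T]$, the at most two \emph{transitional} blocks and all the bridge pieces together contribute a vanishing fraction of both the length and the (absolute) energy, so $\frac{1}{T}\int_0^T f(v,v',v'')\,dt\to\psi_f$ and hence $\psi_f^{+}\le\psi_f$. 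The reverse inequality is the mirror image: take $u$ optimal on $\mathbb{R}^{+}$, form the analogous blocks, and mix them symmetrically about the origin to obtain a competitor on $\mathbb{R}$ whose two-sided average tends to $\psi_f^{+}$.

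The only delicate point --- and the part I expect to cost the most care to write --- is the mixture bookkeeping: one must check that the cubic interpolants can be taken with $C^1$ matching endpoint data and with a bound on $\int|f(v_m,v_m',v_m'')|$ over the bridges that is uniform in $m$ (this uses $u\in W^{1,\infty}$, so the endpoint data, and with them the cubic interpolant together with its first two derivatives, are controlled independently of $m$, making the polynomial integrand uniformly bounded on the unit-length bridges); and then that the double condition on $k_n$ simultaneously kills the length deficit and the energy deficit from the transitional region at the right end of $[0,T]$. For the constrained statement $\psi_f(a)=\psi_f^{+}(a)$ the very same construction works, since each bridge and each transitional block changes $\int\varpi$ only by an $O(1)$ amount, so the constraint $[\varpi]=a$ passes to the limit by the same estimates. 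Everything else reduces to the autonomy of $f$ in $(\ref{intgrd})$, which makes energies translation invariant and lets the blocks be concatenated freely.
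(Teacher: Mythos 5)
Your proposal is correct and matches the paper's own treatment: this theorem is simply a restatement of Proposition \ref{THequal}, which the paper proves by exactly the mixture construction you describe (truncation to $[-\alpha_n,\alpha_n]$, cubic bridges vanishing to first order at $\pm(\alpha_n+1)$, and repetition counts $k_n$ chosen so that the transitional blocks contribute a vanishing fraction of length and energy). The paper gives no separate argument for this theorem, so invoking Proposition \ref{THequal} is precisely what is intended.
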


\begin{definition}
\label{strong_m}The function $\varpi\in E$ is a strongly optimal solution to
$\mathbb{P}\left(  f\right)  $ if

$\left(  i\right)  $ For any bounded interval $\left[  T_{1},T_{2}\right]  ,$%
\[
\zeta_{T}\left(  f,\mathcal{V}_{\varpi}\left(  T_{1}\right)  ,\mathcal{V}%
_{\varpi}\left(  T_{2}\right)  \right)  =\frac{1}{T}\int_{0}^{T}f\left(
\bar{\varpi}\left(  t\right)  ,\bar{\varpi}^{\prime}\left(  t\right)
,\bar{\varpi}^{\prime\prime}\left(  t\right)  \right)  dt,
\]

$\left(  ii\right)  $
\[
\psi_{f}=\liminf\limits_{T\rightarrow\infty}J_{T}\left(  f,\varpi\right)  .
\]
the set of all strongly optimal solutions is denoted by $\Xi\left(  f\right)
,$ and that which are periodic is denoted by $\tilde{\Xi}\left(  f\right)  .$
\end{definition}

\cite{Lwz} showed $\Xi\left(  f\right)  $ and $\tilde{\Xi}\left(  f\right)  $
are all nonempty. If no confusion arises, all minimizers below would mean
\textit{strongly optimal solutions}.

\begin{theorem}
$\psi_{f}\left(  x\right)  $ is a convex function of $x.$
\end{theorem}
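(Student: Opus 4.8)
The plan is to verify directly the defining convexity inequality: for $a_1,a_2\in\mathbb{R}$ and $\lambda\in(0,1)$,
\[
\psi_f\bigl(\lambda a_1+(1-\lambda)a_2\bigr)\le\lambda\,\psi_f(a_1)+(1-\lambda)\,\psi_f(a_2).
\]
(That $\psi_f$ is finite‑valued is seen by testing $\mathbb{P}(f;a)$ with the constant function $\varpi\equiv a$ for an upper bound, together with a standard coercivity estimate on $f$ for a lower bound, so the displayed inequality is the whole content.) Since by Proposition \ref{THequal} we may work on $\mathbb{R}^{+}$, fix $\varepsilon>0$ and use Remark \ref{exists} to choose $\varpi_i\in E$ with $[\varpi_i]=a_i$, with $\lim_{T\to\infty}J_T(f;\varpi_i)$ and $\lim_{T\to\infty}[\varpi_i]_T$ existing, and with $\lim_{T\to\infty}J_T(f;\varpi_i)\le\psi_f(a_i)+\varepsilon$; since $\varpi_i\in E\subset W^{1,\infty}(\mathbb{R}^{+})$, the pairs $\mathcal{V}_{\varpi_i}(s)=(\varpi_i(s),\varpi_i'(s))$ are uniformly bounded in $s$.

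Next I would assemble a competitor by the mixture device already used in Proposition \ref{THequal}. Over a sequence of blocks of growing length $L_m\to\infty$, lay down a truncated copy of $\varpi_1$ on a subinterval of length $\approx\lambda L_m$, then a short cubic transition piece, then a truncated copy of $\varpi_2$ on a subinterval of length $\approx(1-\lambda)L_m$, then another cubic transition, and repeat these two blocks $k_m$ times before passing to scale $m+1$ (the counts $\{k_m\}$ being chosen exactly as in Proposition \ref{THequal} so that all error ratios tend to $0$). Each transition piece is a degree‑$3$ polynomial matching the values and first derivatives of the adjacent pieces over an interval of fixed length; it therefore lies in $W_{loc}^{2,1}$, and because it interpolates uniformly bounded endpoint data over a fixed‑length interval its $C^2$ norm, hence its $I$‑energy, is bounded by a constant independent of $m$. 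Call the result $w=Mix(\cdots)\in E$. Since the $\varpi_i$‑blocks occupy length fractions tending to $\lambda$ and $1-\lambda$ while the transitions occupy a fraction tending to $0$, and since $J_T(f;\varpi_i)$ and $[\varpi_i]_T$ converge, one obtains $\lim_{T\to\infty}J_T(f;w)\le\lambda\,\psi_f(a_1)+(1-\lambda)\,\psi_f(a_2)+\varepsilon$ and $\lim_{T\to\infty}[w]_T=\lambda a_1+(1-\lambda)a_2$, the latter up to an $O(1/L_m)$ discrepancy caused by truncation and by the transitions, which I would absorb by tuning the transition pieces (or inserting one extra short adjustment bump per block) so that the mean is \emph{exactly} $\lambda a_1+(1-\lambda)a_2$; the integral correction needed per block is $O(1)$, so its length and energy are $O(1)$ and vanish in the average. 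Then $w$ is admissible for $\mathbb{P}(f;\lambda a_1+(1-\lambda)a_2)$, and $\psi_f(\lambda a_1+(1-\lambda)a_2)\le\liminf_{T\to\infty}J_T(f;w)\le\lambda\,\psi_f(a_1)+(1-\lambda)\,\psi_f(a_2)+\varepsilon$; letting $\varepsilon\downarrow0$ closes the argument.

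The main obstacle I anticipate is the bookkeeping that forces the mean constraint $[w]=\lambda a_1+(1-\lambda)a_2$ to hold \emph{exactly} rather than only asymptotically, while keeping the energy of the correction negligible; this is tied to the uniform‑in‑$m$ energy bound on the gluing regions, which involves $\varpi''$ and so genuinely requires that the interpolating cubics be built over intervals of a fixed positive length from uniformly bounded endpoint data. The multiscale choice of $\{L_m\},\{k_m\}$ guaranteeing simultaneous convergence of the energy average and of the mean is routine once the $Mix$ construction of Proposition \ref{THequal} is available. Note that only the stated inequality is needed; no lower‑semicontinuity or attainment argument enters, and the reverse inequality is vacuous for convexity.
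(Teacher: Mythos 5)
The paper does not actually prove this theorem: it is stated once bare and restated later with an attribution to \cite{CMM}, so there is no in-paper argument to compare against. Your mixture construction is, in outline, the standard proof (the one underlying the cited reference), and it uses exactly the one tool the paper itself develops for statements of this kind, namely the $Mix$ device of Proposition \ref{THequal}: concatenate near-optimal competitors for the means $a_{1}$ and $a_{2}$ in time proportions $\lambda$ and $1-\lambda$ over blocks of growing length, glue with fixed-length cubic Hermite transitions whose energy is uniformly bounded because $E\subset W^{1,\infty}(\mathbb{R}^{+})$ makes the endpoint data $\mathcal{V}_{\varpi_{i}}(s)$ uniformly bounded, and choose the repetition counts $\{k_{m}\}$ so that a single block at the next scale is negligible against the accumulated length. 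Two remarks. First, the effort you invest in forcing the mean to equal $\lambda a_{1}+(1-\lambda)a_{2}$ \emph{exactly} is unnecessary: the constraint in $\left(\ref{constr_min}\right)$ is $\left[\varpi\right]=\liminf_{T\rightarrow\infty}\left[\varpi\right]_{T}=a$, an asymptotic condition, so the $O(1/L_{m})$ per-block discrepancy from truncation and transitions already vanishes in the limit and your $w$ is admissible as built. Second, the only point requiring genuine care is that $J_{T}(f;w)$ and $\left[w\right]_{T}$ converge as $T\rightarrow\infty$ through \emph{all} values and not merely along block endpoints (a $T$ landing mid-block is biased toward $a_{1}$ or $a_{2}$); this is precisely what the growth conditions on $\{k_{m}\}$ in Proposition \ref{THequal} guarantee, and since you explicitly import them the argument closes. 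Also note that the pointwise lower bound you need for finiteness is not immediate from $\left(\ref{intgrd}\right)$ because of the $-\xi^{2}y^{2}$ term; one must pass through the interpolation inequality as in the proof of Theorem \ref{bound_w1} to bound $J_{T}$ below. With these caveats, I see no gap.
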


Let $x\in\mathbb{R}^{2},$ define
\[
\pi_{f}\left(  x\right)  =\inf\left\{  \left.  \liminf\limits_{T\rightarrow
\infty}\left[  I_{T}\left(  f;\varpi\right)  -T\psi_{f}\right]  \right\vert
\varpi\in W_{loc}^{2,1}\left(  \mathbb{R}^{+}\right)  ,\mathcal{V}_{\varpi
}\left(  0\right)  =x\right\}
\]

\begin{theorem}
For $T>0,$ $x,$ $y\in\mathbb{R}^{2}$. There are $\pi_{f}\left(  x\right)  $,
$\theta_{T}\left(  f,x,y\right)  $ such that $\zeta_{T}\left(  f,x,y\right)  $
may be decomposed as follows,%
\[
T\zeta_{T}\left(  f,x,y\right)  =T\psi_{f}+\pi_{f}\left(  x\right)  -\pi
_{f}\left(  y\right)  +\theta_{T}\left(  f,x,y\right)  ,
\]
where $\pi_{f}\left(  x\right)  $ is continuous w.r.t. $x$, $\theta_{T}\left(
f,x,y\right)  $ is a nonnegative continuous function of $\left(  T,x,y\right)
$, moreover,%
\[
\min_{y\in\mathbb{R}^{2}}\theta_{T}\left(  f,x,y\right)  =0,\text{ for any
}x\in\mathbb{R}^{2}.
\]

\end{theorem}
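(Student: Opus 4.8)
The plan is to \emph{define} $\theta_{T}$ by the proposed identity,
\[
\theta_{T}(f,x,y):=T\zeta_{T}(f,x,y)-T\psi_{f}-\pi_{f}(x)+\pi_{f}(y),
\]
so that the decomposition holds tautologically, and then to verify the four substantive assertions: (a) $\pi_{f}$ is finite and continuous on $\mathbb{R}^{2}$; (b) $\theta_{T}\geqslant0$; (c) $\theta_{T}$ is continuous in $(T,x,y)$; and (d) $\min_{y}\theta_{T}(f,x,y)=0$ for every $x$. Two structural facts drive everything. The first is a concatenation (subadditivity) inequality: since near-optimal trajectories for $\mathbb{P}_{T}(f;x,z)$ and $\mathbb{P}_{S}(f;z,y)$ have prescribed values of $\mathcal{V}_{\varpi}$ at their endpoints, gluing them at $z$ automatically matches both $\varpi$ and $\varpi^{\prime}$, so the glued curve lies in $W^{2,1}(0,T+S)$ and is admissible; hence
\[
(T+S)\zeta_{T+S}(f,x,y)\leqslant T\zeta_{T}(f,x,z)+S\zeta_{S}(f,z,y).
\]
The second is the coercivity of the specific integrand $(\ref{intgrd})$: the term $\tfrac{\xi^{2}}{2}z^{2}$ controls $\varpi^{\prime\prime}$ in $L^{2}$, the quartic $\tfrac{1}{24}x^{4}$ controls $\varpi$, and the indefinite term $-\xi^{2}y^{2}$ is absorbed by the interpolation estimate $\int(\varpi^{\prime})^{2}\leqslant\tfrac12\int\varpi^{2}+\tfrac12\int(\varpi^{\prime\prime})^{2}$; consequently $f$ is bounded below, minimizing sequences for $\mathbb{P}_{T}(f;x,y)$ are precompact in the weak topology, the infimum $\zeta_{T}(f,x,y)$ is attained, and $(T,x,y)\mapsto T\zeta_{T}(f,x,y)$ is finite-valued.

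For (a) I would first bound $\pi_{f}$ from above. Since $\tilde{\Xi}(f)\neq\emptyset$, fix a periodic minimizer $p$ of period $P$, for which $I_{P}(f;p)=P\psi_{f}$. Given $x$, connect $x$ to a chosen turnpike state $\mathcal{V}_{p}(s_{0})$ by an explicit cubic on $[0,1]$ and then follow $p$ forever; this produces a $\varpi$ with $\mathcal{V}_{\varpi}(0)=x$ along which $I_{T}(f;\varpi)-T\psi_{f}$ stays bounded, with bound depending locally boundedly on $x$, so $\pi_{f}(x)<\infty$. For the lower bound I would use that $\psi_{f}$ is the least average energy: only \emph{good} trajectories — those along which $I_{T}(f;\varpi)-T\psi_{f}$ stays bounded — can contribute a finite value to the infimum, and for such trajectories the turnpike theory provides a lower bound on the excess depending only on $x$; hence $\pi_{f}(x)>-\infty$. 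Continuity of $\pi_{f}$ then follows from a bridging estimate: connecting $x^{\prime}$ to $x$ on a short interval of length $\tau$ and continuing optimally from $x$ gives $\pi_{f}(x^{\prime})\leqslant\tau\zeta_{\tau}(f,x^{\prime},x)-\tau\psi_{f}+\pi_{f}(x)$, and the bridging cost tends to $0$ as $x^{\prime}\to x$; the symmetric inequality yields local equicontinuity and hence continuity of $\pi_{f}$.

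For (b), nonnegativity of $\theta_{T}$ is equivalent to
\[
T\zeta_{T}(f,x,y)-T\psi_{f}\geqslant\pi_{f}(x)-\pi_{f}(y).
\]
I would prove this by concatenation: given a near-optimal finite trajectory from $x$ to $y$ on $[0,T]$ and a forward trajectory from $y$ nearly realizing $\pi_{f}(y)$, glue them (again matching $\varpi,\varpi^{\prime}$) into a forward trajectory issuing from $x$; since the front piece contributes the fixed finite amount $T\zeta_{T}(f,x,y)-T\psi_{f}$ without disturbing the limiting behaviour of the tail, the $\liminf$ defining $\pi_{f}(x)$ is at most $(T\zeta_{T}(f,x,y)-T\psi_{f})+\pi_{f}(y)$, which is the desired inequality. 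For (d), I would invoke attainment of the infimum defining $\pi_{f}$ (from the same direct-method compactness): let $\varpi^{\ast}$ realize $\pi_{f}(x)$ and set $y^{\ast}=\mathcal{V}_{\varpi^{\ast}}(T)$. Splitting $\varpi^{\ast}$ at $T$ and using the dynamic-programming principle, the front piece is a minimizer for $\mathbb{P}_{T}(f;x,y^{\ast})$ and the tail realizes $\pi_{f}(y^{\ast})$, which forces $T\zeta_{T}(f,x,y^{\ast})-T\psi_{f}=\pi_{f}(x)-\pi_{f}(y^{\ast})$, i.e. $\theta_{T}(f,x,y^{\ast})=0$; combined with (b) this gives $\min_{y}\theta_{T}(f,x,y)=0$.

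Finally, (c) reduces, via (a), to the continuity of $(T,x,y)\mapsto T\zeta_{T}(f,x,y)$: lower semicontinuity is the direct method together with the coercivity above, while upper semicontinuity I would obtain by perturbing an optimal trajectory for the limiting data, adjusting its endpoints and horizon length by short cubic corrections whose cost is controlled uniformly on compact sets; $\theta_{T}$ is then continuous as a difference of continuous functions. The main obstacle, in my estimation, is the rigorous handling of (b)--(d): one must show both that the infinite-horizon infimum defining $\pi_{f}$ is \emph{attained} and that gluing a finite front piece onto a tail preserves the value of the $\liminf$, rather than merely bounding it on one side. This is precisely where the coercivity of $(\ref{intgrd})$ and the good-function bounds do the essential work, since they guarantee that the relevant trajectories stay in a compact set and that no cost escapes to infinity under concatenation.
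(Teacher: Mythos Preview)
The paper does not supply its own proof of this theorem: it is stated in the Preliminaries section as a background structural result, implicitly drawn from Leizarowitz \cite{Lwz} and Leizarowitz--Mizel \cite{L&M}, and is then used as a black box (via Property~A, Property~B, and Remark~\ref{good}) in the later arguments. So there is no in-paper proof to compare against; what can be said is that your outline is precisely the classical Leizarowitz scheme that underlies the cited references, namely defining $\theta_{T}$ by the identity, deriving nonnegativity from the dynamic-programming inequality $\pi_{f}(x)\leqslant T\zeta_{T}(f,x,y)-T\psi_{f}+\pi_{f}(y)$ obtained by concatenation, and reading off $\min_{y}\theta_{T}=0$ from the reverse inequality.

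One technical remark on step~(d). You propose to obtain the minimum by first proving that the infimum defining $\pi_{f}(x)$ is \emph{attained} by some $\varpi^{\ast}$ and then setting $y^{\ast}=\mathcal{V}_{\varpi^{\ast}}(T)$. Attainment is true here, but it is not a consequence of ``the same direct-method compactness'' used for $\zeta_{T}$: the functional defining $\pi_{f}$ is a $\liminf$ over an unbounded horizon, and weak lower semicontinuity does not interact well with a double $\liminf$. The route actually taken in \cite{Lwz} avoids this circularity: from the restriction-to-$[0,T]$ argument you already sketched, for every $\varepsilon>0$ there is $y_{\varepsilon}$ with $0\leqslant\theta_{T}(f,x,y_{\varepsilon})\leqslant\varepsilon$; the coercivity estimates you set up (quartic growth in $x$, $L^{2}$ control of $\varpi''$, interpolation for $\varpi'$) then force $\{y_{\varepsilon}\}$ to stay in a compact subset of $\mathbb{R}^{2}$, and the continuity of $\theta_{T}$ from step~(c) upgrades the infimum to a minimum. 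With that adjustment your sketch is complete and matches the standard argument behind the theorem as quoted in the paper.
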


Here we list some properties that will be refered to in later proofs.

\begin{description}
\item[Property A] The mapping $T\longmapsto\left[  I_{T}\left(  f;\varpi
\right)  -T\psi_{f}\right]  $ is bounded on $\mathbb{R}^{+}$.
\end{description}

\begin{remark}
\label{good}It is a by-product from the proof of \cite{L&M}, that any funtion
$\varpi\in E$ satisfying definition \ref{strong_m} (i) possesses Property A.
This observation was later refined by \cite{M&Z1}.
\end{remark}

\begin{description}
\item[Property B] For any bounded interval $\left[  T_{1},T_{2}\right]  ,$
\[
I_{T_{1},T_{2}}\left(  f;\varpi\right)  -\left(  T_{2}-T_{1}\right)  \psi
_{f}+\pi_{f}\left(  \mathcal{V}_{\varpi}\left(  T_{1}\right)  \right)
-\pi_{f}\left(  \mathcal{V}_{\varpi}\left(  T_{2}\right)  \right)  .
\]

\end{description}

\begin{proposition}
Let $T>0,$ and $\varpi\in W^{2,1}\left(  \left[  0,T\right]  \right)  $. Then
$I_{T}\left(  f;\varpi\right)  <\infty$ if and only if $\varpi\in
W^{2,2}\left(  \left[  0,T\right]  \right)  .$

\begin{proof}
Assume $\varpi\in W^{2,2}\left(  \left[  0,T\right]  \right)  ,$ it easy to
see $I_{T}\left(  f;\varpi\right)  <\infty.$ On the other hand, $\varpi\in
W^{2,1}\left(  \left[  0,T\right]  \right)  $ and $I_{T}\left(  f;\varpi
\right)  <\infty$ imply $\varpi\in W^{2,2}\left(  \left[  0,T\right]  \right)
.$
\end{proof}
\end{proposition}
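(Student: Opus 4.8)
The plan is to isolate the top-order term $\tfrac{\xi^{2}}{2}(\varpi'')^{2}$, which is the only summand in the integrand capable of producing an infinite contribution, and to show that all the remaining (lower-order) terms are automatically integrable on $[0,T]$ for every $\varpi\in W^{2,1}([0,T])$.

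First I would invoke the one-dimensional Sobolev embedding $W^{2,1}([0,T])\hookrightarrow C^{1}([0,T])$: if $\varpi\in W^{2,1}([0,T])$ then $\varpi''\in L^{1}([0,T])$, so $\varpi'$ coincides a.e.\ with an absolutely continuous function, and likewise $\varpi$ is $C^{1}$ on $[0,T]$. In particular there is a finite constant $M=M(\varpi)$ with $|\varpi(t)|\le M$ and $|\varpi'(t)|\le M$ for all $t\in[0,T]$.

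Next, using the explicit form $\left(\ref{intgrd}\right)$, I would split
\[
I_{T}(f;\varpi)=\frac{\xi^{2}}{2}\int_{0}^{T}\varpi''(t)^{2}\,dt+R(\varpi),
\qquad
R(\varpi)=\int_{0}^{T}\Bigl[-\xi^{2}\varpi'^{2}+\tfrac{\xi^{2}+\tau}{2}\varpi^{2}-\tfrac{\gamma}{6}\varpi^{3}+\tfrac{1}{24}\varpi^{4}\Bigr]dt .
\]
Since $\varpi$ and $\varpi'$ are bounded by $M$, the integrand defining $R(\varpi)$ is bounded in absolute value by a constant $C=C(M,\xi,\tau,\gamma)$, hence $|R(\varpi)|\le CT<\infty$. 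In particular the right-hand side is never of the indeterminate form $\infty-\infty$, so $I_{T}(f;\varpi)$ is well defined in $(-\infty,+\infty]$, and $I_{T}(f;\varpi)<\infty$ if and only if $\int_{0}^{T}\varpi''(t)^{2}\,dt<\infty$, i.e.\ $\varpi''\in L^{2}([0,T])$. Finally, since $\varpi,\varpi'\in C([0,T])\subset L^{2}([0,T])$ already, the condition $\varpi''\in L^{2}([0,T])$ is precisely the assertion $\varpi\in W^{2,2}([0,T])$, which gives both implications.

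There is no genuine obstacle in this argument; the only two points that deserve a word of care are the embedding $W^{2,1}\hookrightarrow C^{1}$ in dimension one (which supplies the uniform bound $M$) and the observation that the lower-order part $R(\varpi)$ is finite — together these ensure that $I_{T}(f;\varpi)$ is meaningful and that its finiteness is governed entirely by the $(\varpi'')^{2}$ term.
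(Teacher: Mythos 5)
Your proof is correct and complete; the paper's own ``proof'' merely restates both implications without any argument, so your decomposition of $I_{T}(f;\varpi)$ into the nonnegative top-order term $\tfrac{\xi^{2}}{2}\int_{0}^{T}(\varpi'')^{2}\,dt$ plus a remainder $R(\varpi)$ made finite by the one-dimensional embedding $W^{2,1}([0,T])\hookrightarrow C^{1}([0,T])$ supplies exactly the content the paper omits. The observation that this rules out the indeterminate form $\infty-\infty$, so that finiteness of $I_{T}$ is equivalent to $\varpi''\in L^{2}$, is the one point that genuinely needed saying, and you said it.
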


The above lemma tells us, the space $W^{2,1}\left(  \left[  0,T\right]
\right)  $ is enough for our problem though $W^{2,2}\left(  \left[
0,T\right]  \right)  $ seems more natural.

\begin{theorem}
[\cite{CMM}]The function $\psi_{f}\left(  x\right)  $ is convex w.r.t. $x$. In
particular, $\psi_{f}\left(  x\right)  $ is continuous.
\end{theorem}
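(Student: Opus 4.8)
The plan is to prove the convexity inequality
\[
\psi_f\bigl(\lambda a+(1-\lambda)b\bigr)\le\lambda\,\psi_f(a)+(1-\lambda)\,\psi_f(b),\qquad a,b\in\mathbb R,\ \lambda\in(0,1),
\]
by the mixture method of \cite{CMM} already used in Proposition \ref{THequal}; once $\psi_f$ is known to be convex and finite on $\mathbb R$, continuity is automatic. One may assume $a<b$ (the case $a=b$ being trivial) and set $m:=\lambda a+(1-\lambda)b\in(a,b)$.

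Fix $\varepsilon>0$. By Remark \ref{exists}, choose $\varpi_a,\varpi_b\in E$ with $[\varpi_a]=a$, $[\varpi_b]=b$ for which the limits $J_\infty(\varpi_a):=\lim_{T\to\infty}J_T(f;\varpi_a)$ and $J_\infty(\varpi_b)$ exist and satisfy $J_\infty(\varpi_a)\le\psi_f(a)+\varepsilon$, $J_\infty(\varpi_b)\le\psi_f(b)+\varepsilon$; in particular $\|\varpi_a\|_{W^{1,\infty}}$ and $\|\varpi_b\|_{W^{1,\infty}}$ are now fixed numbers. For a large length $A$, let $\varpi_a^A$ be the function on $[0,A+2]$ equal to $\varpi_a(\cdot-1)$ on $[1,A+1]$ and joined to the data $(0,0)$ at the endpoints $0$ and $A+2$ by cubic polynomials over the unit intervals $[0,1]$ and $[A+1,A+2]$, so that $\mathcal V_{\varpi_a^A}(0)=\mathcal V_{\varpi_a^A}(A+2)=(0,0)$; define $\varpi_b^A$ the same way. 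Each has length exactly $A+2$ and vanishing value and derivative at both endpoints, and since $\varpi_a,\varpi_b\in W^{1,\infty}$ the energy and the integral of $\varpi_a^A$, $\varpi_b^A$ over the two cubic caps are bounded by a constant independent of $A$; hence $[\varpi_a^A]_{A+2}\to a$ and $J_{A+2}(f;\varpi_a^A)\to J_\infty(\varpi_a)$ as $A\to\infty$, and likewise for $b$.

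Concatenating translated copies of $\varpi_a^A$ and $\varpi_b^A$ — all of equal length $A+2$, all junctions $C^1$ because the matched data vanish there — produces functions $v\in E$, an instance of $\mathrm{Mix}$. Pick the fraction $\mu=\mu(A)\in(0,1)$ of $a$-type blocks so that $\mu\,[\varpi_a^A]_{A+2}+(1-\mu)\,[\varpi_b^A]_{A+2}=m$ exactly; this is possible once $A$ is large enough that $m$ lies strictly between the two block means, and then $\mu(A)\to\lambda$. Lay the blocks down in an order for which the number of $a$-type blocks among the first $n$ is within $1$ of $n\mu$ (for instance greedily). Then $[v]_T\to m$ — a genuine limit, since after $n$ full blocks $[v]_{n(A+2)}=\tfrac{k}{n}[\varpi_a^A]_{A+2}+(1-\tfrac{k}{n})[\varpi_b^A]_{A+2}$ with $k/n\to\mu$, while a partial block moves $[v]_T$ by $O(A/T)$ — so $[v]=m$; and in the same way $J_T(f;v)\to\mu\,J_{A+2}(f;\varpi_a^A)+(1-\mu)\,J_{A+2}(f;\varpi_b^A)$. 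Taking $A$ large and using the convergences above together with $\mu(A)\to\lambda$, this limiting value is at most $\lambda\,J_\infty(\varpi_a)+(1-\lambda)\,J_\infty(\varpi_b)+\varepsilon\le\lambda\,\psi_f(a)+(1-\lambda)\,\psi_f(b)+2\varepsilon$. Therefore $\psi_f(m)\le\lim_{T\to\infty}J_T(f;v)\le\lambda\,\psi_f(a)+(1-\lambda)\,\psi_f(b)+2\varepsilon$, and $\varepsilon\downarrow0$ gives the inequality.

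It remains to note that $\psi_f$ is finite. The constant $\varpi\equiv a$ lies in $E$, has mean $a$, and gives $J_T(f;\varpi\equiv a)=f(a,0,0)$, so $\psi_f(a)\le f(a,0,0)<\infty$ for every $a$; conversely, one integration by parts shows $I_T(f;\varpi)$ equals $\int_0^T[\tfrac{\xi^2}{2}(\varpi''+\varpi)^2+\tfrac{\tau}{2}\varpi^2-\tfrac{\gamma}{6}\varpi^3+\tfrac{1}{24}\varpi^4]\,dt$ up to a term that is $O(1)$ for $\varpi\in W^{1,\infty}$, and the potential part of the integrand is bounded below, so $\psi_f>-\infty$. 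A finite convex function on $\mathbb R$ is continuous, which completes the proof. The one genuinely delicate point is the one flagged in the third paragraph: arranging that the running mean $[v]_T$ has a limit equal to the prescribed value $m$, rather than merely a $\liminf$ bound — which is precisely why the blocks are kept of bounded length and interleaved instead of placed in two long runs — while still making the per-block cap cost negligible; the remaining estimates are routine bookkeeping.
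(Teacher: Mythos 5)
The paper states this theorem with a citation to \cite{CMM} and supplies no proof of its own, so there is nothing to compare against line by line; your mixture/concatenation argument is correct and is essentially the standard \cite{CMM} construction, i.e.\ the same technique the paper itself deploys for Proposition~\ref{THequal}. The points that need care --- capping the truncated blocks with cubics so the junction data vanish, interleaving the $a$- and $b$-blocks so that $[v]_T$ has a genuine limit equal to $m$ (not merely a $\liminf$), and checking $\psi_f>-\infty$ so that convexity yields continuity --- are all handled adequately.
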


The theorem has an important consequence which is fundamental to the proof of
the main result. We state it in the following lemma.

For the function $f\left(  x,y,z\right)  $ in $\left(  \ref{intgrd}\right)  $,
and $\lambda\in\mathbb{R}$, we define the lagrangian,%
\[
f_{\lambda}\left(  x,y,z\right)  =f\left(  x,y,z\right)  -\lambda x.
\]

\begin{lemma}
\label{convexity}For the function $\psi_{f}\left(  x\right)  $, the following
relations hold%
\begin{equation}
\psi_{f_{\lambda}}=\psi_{f}\left(  \eta\right)  -\lambda\eta,\forall\lambda
\in\partial\psi_{f}\left(  \eta\right)  ,\forall\eta\in\mathbb{R}.
\label{rel1}%
\end{equation}%
\begin{equation}
\Theta\left(  f;\eta\right)  \subset\Theta\left(  f_{\lambda}\right)
,\forall\lambda\in\partial\psi_{f}\left(  \eta\right)  ,\forall\eta
\in\mathbb{R}. \label{rel2}%
\end{equation}
Moreover, if $\eta$ is an exposed point of $\psi_{f}\left(  x\right)  $, then%
\begin{equation}
\Theta\left(  f;\eta\right)  =\Theta\left(  f_{\lambda}\right)  . \label{rel3}%
\end{equation}

\end{lemma}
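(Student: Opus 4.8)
\textbf{Proof proposal for Lemma \ref{convexity}.}

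The plan is to exploit the Legendre-type duality between the constrained problem $\mathbb{P}(f;a)$ and the unconstrained problem $\mathbb{P}(f_\lambda)$ for the linearly perturbed integrand $f_\lambda(x,y,z)=f(x,y,z)-\lambda x$. The starting observation is that for any admissible $\varpi$ with $[\varpi]=\eta$ one has $J_T(f_\lambda;\varpi)=J_T(f;\varpi)-\lambda[\varpi]_T$, so passing to the limit (using Remark \ref{exists} to work with genuine limits) gives $\psi_{f_\lambda}\le \psi_f(\eta)-\lambda\eta$ for every $\eta$. The reverse inequality is where convexity enters: for \emph{any} $\varpi\in E$ with $\lim_T J_T(f;\varpi)$ and $[\varpi]=b$ existing, we get $\lim_T J_T(f_\lambda;\varpi)=\lim_T J_T(f;\varpi)-\lambda b\ge \psi_f(b)-\lambda b$, hence $\psi_{f_\lambda}\ge \inf_{b\in\mathbb{R}}\bigl(\psi_f(b)-\lambda b\bigr)$. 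When $\lambda\in\partial\psi_f(\eta)$, the subgradient inequality $\psi_f(b)\ge\psi_f(\eta)+\lambda(b-\eta)$ says exactly that $b=\eta$ achieves this infimum, so $\psi_{f_\lambda}=\psi_f(\eta)-\lambda\eta$, which is \eqref{rel1}.

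For \eqref{rel2}, take $\varpi\in\Theta(f;\eta)$, so $\lim_T J_T(f;\varpi)=\psi_f(\eta)$ and $[\varpi]=\eta$. Then $\lim_T J_T(f_\lambda;\varpi)=\psi_f(\eta)-\lambda\eta=\psi_{f_\lambda}$ by \eqref{rel1}, so $\varpi$ is a minimizer for the unconstrained $f_\lambda$-problem, i.e.\ $\varpi\in\Theta(f_\lambda)$. For \eqref{rel3}, I need the converse inclusion under the assumption that $\eta$ is an exposed point. Take $\varpi\in\Theta(f_\lambda)$ and let $b=[\varpi]$ (existing after the Remark \ref{exists} reduction). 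Then
\[
\psi_{f_\lambda}=\lim_{T\to\infty}J_T(f_\lambda;\varpi)=\lim_{T\to\infty}J_T(f;\varpi)-\lambda b\ge \psi_f(b)-\lambda b.
\]
Combining with \eqref{rel1} gives $\psi_f(\eta)-\lambda\eta\ge\psi_f(b)-\lambda b$, i.e.\ $\psi_f(b)\le\psi_f(\eta)+\lambda(b-\eta)$. But the exposed-point hypothesis provides $\lambda$ with $\psi_f(z)>\psi_f(\eta)+\lambda(z-\eta)$ for all $z\ne\eta$; choosing this $\lambda$ (which is legitimate since an exposing functional is in particular a subgradient, so \eqref{rel1} applies to it) forces $b=\eta$. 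Then the inequality chain collapses to equalities, so $\lim_T J_T(f;\varpi)=\psi_f(\eta)$ with $[\varpi]=\eta$, i.e.\ $\varpi\in\Theta(f;\eta)$.

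The main obstacle I anticipate is the careful handling of $\liminf$ versus genuine $\lim$ for the average energy and for the mean $[\varpi]$: the definitions use $\liminf$, so to run the duality argument cleanly I must invoke Remark \ref{exists} to replace any competitor by one for which both limits exist, and check that this replacement preserves membership in the relevant minimizer sets $\Theta(f_\lambda)$ and $\Theta(f;\eta)$. A secondary technical point is verifying that the exposing affine functional $\lambda$ indeed lies in $\partial\psi_f(\eta)$ (immediate from the strict inequality by taking a limit $z\to\eta$, using continuity of $\psi_f$ from the cited convexity theorem) so that \eqref{rel1} is available for that particular $\lambda$. Everything else is a direct manipulation of the identity $J_T(f_\lambda;\varpi)=J_T(f;\varpi)-\lambda[\varpi]_T$ together with the definition of the subgradient.
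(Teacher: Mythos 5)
Your proposal is correct and follows essentially the same route as the paper: both identify $\psi_{f_\lambda}=\inf_{\xi}\left(\psi_f(\xi)-\lambda\xi\right)$ by stratifying competitors according to their mean (the paper phrases this as computing the conjugate $\psi_f^{\ast}(\lambda)$), then apply the subgradient inequality for \eqref{rel1}, a direct substitution for \eqref{rel2}, and the strictness of the exposed-point inequality to force $[\varpi]=\eta$ for \eqref{rel3}. Your explicit attention to the $\liminf$-versus-$\lim$ reduction via Remark \ref{exists} and to checking that the exposing slope is a subgradient is, if anything, more careful than the paper's write-up.
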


\begin{proof}
Since $\psi_{f}\left(  x\right)  $ is convex, we may define its conjugate
function%
\[
\psi_{f}^{\ast}\left(  z\right)  =\sup_{x\in\mathbb{R}}\left[  xz-\psi
_{f}\left(  x\right)  \right]  .
\]
By remark \ref{exists}, we have%
\begin{align*}
&  -\psi_{f_{\lambda}}\\
&  =-\inf_{\varpi}\left\{  \lim_{T\rightarrow\infty}\frac{1}{T}\int_{0}%
^{T}f_{\lambda}\left(  \varpi,\varpi^{\prime},\varpi^{\prime\prime}\right)
dt\right\} \\
&  =-\inf_{\varpi}\left\{  \lim_{T\rightarrow\infty}\frac{1}{T}\int_{0}%
^{T}\left[  f\left(  \varpi,\varpi^{\prime},\varpi^{\prime\prime}\right)
-\lambda\varpi\right]  dt\right\} \\
&  =-\inf_{\varpi}\left\{  \lim_{T\rightarrow\infty}\frac{1}{T}\int_{0}%
^{T}f\left(  \varpi,\varpi^{\prime},\varpi^{\prime\prime}\right)
dt-\lambda\lim_{T\rightarrow\infty}\frac{1}{T}\int_{0}^{T}\varpi dt\right\} \\
&  =\sup_{\varpi}\left\{  \lambda\lim_{T\rightarrow\infty}\frac{1}{T}\int%
_{0}^{T}\varpi dt-\lim_{T\rightarrow\infty}\frac{1}{T}\int_{0}^{T}f\left(
\varpi,\varpi^{\prime},\varpi^{\prime\prime}\right)  dt\right\} \\
&  =\sup_{\xi\in\mathbb{R}}\sup_{\left[  \varpi\right]  =\xi}\left\{
\lim_{T\rightarrow\infty}\frac{1}{T}\int_{0}^{T}\left[  \lambda\xi-f\left(
\varpi,\varpi^{\prime},\varpi^{\prime\prime}\right)  \right]  dt\right\} \\
&  =\sup_{\xi\in\mathbb{R}}\sup_{\left[  \varpi\right]  =\xi}\left\{
\lambda\xi-\lim_{T\rightarrow\infty}\frac{1}{T}\int_{0}^{T}f\left(
\varpi,\varpi^{\prime},\varpi^{\prime\prime}\right)  dt\right\} \\
&  =\sup_{\xi\in\mathbb{R}}\left\{  \xi\lambda-\psi_{f}\left(  \xi\right)
\right\}  =\psi_{f}^{\ast}\left(  \lambda\right)  .
\end{align*}
hence%
\[
\psi_{f_{\lambda}}=-\psi_{f}^{\ast}\left(  \lambda\right)  =\inf_{\xi
\in\mathbb{R}}\left\{  \psi_{f}\left(  \xi\right)  -\xi\lambda\right\}  .
\]
The convexity of $\psi_{f}\left(  \xi\right)  $ implies, $\forall\eta
\in\mathbb{R},\forall\lambda\in\partial\psi_{f_{\lambda}}\left(  \eta\right)
,$%
\begin{equation}
\psi_{f}\left(  z\right)  -\lambda z\geqslant\psi_{f}\left(  \eta\right)
-\lambda\eta,\forall z\in\mathbb{R}, \label{conv_inequ}%
\end{equation}
that is, if $\lambda\in\partial\psi_{f_{\lambda}}\left(  \eta\right)  ,$ then%
\[
\inf_{\xi\in\mathbb{R}}\left\{  \psi_{f}\left(  \xi\right)  -\xi
\lambda\right\}  =\psi_{f}\left(  \eta\right)  -\lambda\eta.
\]
Thus $\left(  \ref{rel1}\right)  $ holds and,%
\[
\psi_{f_{\lambda}}=\psi_{f}\left(  \eta\right)  -\lambda\eta,\forall\lambda
\in\partial\psi_{f_{\lambda}}\left(  \eta\right)  ,\forall\eta\in\mathbb{R}.
\]

Suppose $\varpi\in\Theta\left(  f,\eta\right)  ,$%
\[
\psi_{f}\left(  \eta\right)  =\liminf\limits_{T\rightarrow\infty}\frac{1}%
{T}\int_{0}^{T}f\left(  \varpi,\varpi^{\prime},\varpi^{\prime\prime}\right)
dt.
\]
Noting $\left[  \varpi\right]  =\eta,$ we obtain%
\begin{equation}
\psi_{f}\left(  \eta\right)  -\lambda\eta=\liminf\limits_{T\rightarrow\infty
}\frac{1}{T}\int_{0}^{T}\left[  f\left(  \varpi,\varpi^{\prime},\varpi
^{\prime\prime}\right)  -\lambda\varpi\right]  dt, \label{conv_equ}%
\end{equation}
the left hand side is nothing else than $\psi_{f_{\lambda}}$, therefore
$\left(  \ref{rel2}\right)  $ is verified.

In addition, $\eta$ being an exposed point of $\psi_{f}\left(  x\right)  $
implies the equality in $\left(  \ref{conv_inequ}\right)  $ holds only when
$z=\eta$. Hence every $\varpi$ solving $\left(  \ref{conv_equ}\right)  $
should verify $\left[  \varpi\right]  =\eta,$ that is%
\[
\Theta\left(  f;\eta\right)  \subset\Theta\left(  f_{\lambda}\right)  ,
\]
showing that $\left(  \ref{rel3}\right)  $ is valid in the case of exposed point.
\end{proof}

\begin{remark}
The proof of the lemma also indicates%
\begin{equation}
\Theta\left(  f_{\lambda}\right)  =\bigcup\left\{  \Theta\left(
f;\eta\right)  ,\lambda\in\partial\psi_{f}\left(  \eta\right)  \right\}
,\forall\lambda\in\mathbb{R}. \label{rel_s}%
\end{equation}

\end{remark}

The following theorem is implied in \cite{L&M}, and explicitly stated in
\cite{M&Z1}, since the proof is not provided there, we give it here in detail.

\begin{theorem}
\label{bound_w1}Let $\lambda_{n}$ be a bounded sequence and $\tilde{\varpi
}_{n}$ the optimal periodic solution to problem $\mathbb{P}\left(
f_{\lambda_{n}}\right)  $. If the sequence of minimal energy $\psi
_{f_{\lambda_{n}}}$ is bounded. Then there exists a positive constant
$C>0,$such that%
\[
\left\Vert \tilde{\varpi}_{n}\right\Vert _{W^{1,\infty}\left(  \mathbb{R}%
\right)  }\leqslant C.
\]

\end{theorem}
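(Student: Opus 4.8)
My plan is to bound $\tilde\varpi_n$ first on average over a period, and then to upgrade this to a genuine pointwise bound by exploiting that $\tilde\varpi_n$ minimizes the energy on every finite subinterval. Write $f_{\lambda}(x,y,z)=f(x,y,z)-\lambda x$ with $f$ as in $(\ref{intgrd})$, set $h(x)=\tfrac\tau2 x^2-\tfrac\gamma6 x^3+\tfrac1{24}x^4$, and let $T_n$ be the period of $\tilde\varpi_n$. Since $\{\lambda_n\}$ is bounded, comparison of $\psi_{f_{\lambda_n}}$ with the constant competitor $0$ gives $\psi_{f_{\lambda_n}}\le f_{\lambda_n}(0,0,0)=0$, so by hypothesis $|\psi_{f_{\lambda_n}}|\le M$ for a uniform $M$; passing to a subsequence we may also assume $\lambda_n\to\lambda_0$. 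Integrating over one period and using periodicity to undo the integration by parts that turned $\tfrac{\xi^2}2(\varpi''+\varpi)^2$ into the $-\xi^2y^2$ term of $(\ref{intgrd})$,
\[
T_n\psi_{f_{\lambda_n}}=\int_0^{T_n}\Big\{\tfrac{\xi^2}2\big(\tilde\varpi_n''+\tilde\varpi_n\big)^2+h(\tilde\varpi_n)-\lambda_n\tilde\varpi_n\Big\}\,dt .
\]
Because $h(x)-\lambda x\ge\tfrac1{48}x^4-c_0$ for all $x$ and all bounded $|\lambda|$, this yields $\tfrac1{T_n}\int_0^{T_n}\{(\tilde\varpi_n''+\tilde\varpi_n)^2+\tilde\varpi_n^4\}\,dt\le c_1$, and hence, by Jensen, the triangle inequality and an interpolation inequality for periodic functions, $\tfrac1{T_n}\int_0^{T_n}\{\tilde\varpi_n^2+(\tilde\varpi_n')^2+(\tilde\varpi_n'')^2\}\,dt\le c_2$, with $c_0,c_1,c_2$ independent of $n$.

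The heart of the proof, and the step I expect to be the main obstacle, is to upgrade these averages to a \emph{uniform spacing} statement: there is a constant $L$ such that every interval of length $L$ contains a point of the good set $G_n=\{t:\ \tilde\varpi_n(t)^2+\tilde\varpi_n'(t)^2\le K\}$, for a fixed large $K$ (chosen at the end of this step). By the averaged bounds and Chebyshev's inequality, $G_n$ has density at least $\tfrac12$ in $[0,T_n]$, so, viewing $\tilde\varpi_n$ on the circle $\mathbb R/T_n\mathbb Z$, $G_n$ is nonempty and its complement is a union of open arcs whose endpoints lie in $G_n$. Suppose the spacing statement fails, so some such arc $(a,b)$ with $a,b\in G_n$ has $b-a>L$. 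I would compare $\tilde\varpi_n\restriction[a,b]$, which minimizes $I_{a,b}(f_{\lambda_n};\cdot)$ among functions with the boundary data $\mathcal V_{\tilde\varpi_n}(a),\mathcal V_{\tilde\varpi_n}(b)$ (both of norm $\le\sqrt K$), with the competitor that runs a cubic from $\mathcal V_{\tilde\varpi_n}(a)$ to $(0,0)$ on $[a,a+1]$, stays $\equiv0$ on $[a+1,b-1]$, and runs a cubic from $(0,0)$ to $\mathcal V_{\tilde\varpi_n}(b)$ on $[b-1,b]$: its energy is $\le C_{\mathrm{splice}}(K)$, uniformly in $n$ and in $b-a$, since the flat piece costs $f_{\lambda_n}(0,0,0)=0$. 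On the other hand, integration by parts on $[a,b]$ together with $h-\lambda x\ge\tfrac1{48}x^4-c_0$ gives
\[
I_{a,b}(f_{\lambda_n};\tilde\varpi_n)\ \ge\ \int_a^b\Big\{\tfrac{\xi^2}2(\tilde\varpi_n''+\tilde\varpi_n)^2+\tfrac1{48}\tilde\varpi_n^4\Big\}\,dt-c_0(b-a)-\xi^2K ,
\]
and the decisive quantitative claim is that the remaining integral is $\ge\delta(K)(b-a)$ with $\delta(K)\to\infty$ as $K\to\infty$: on $[a,b]$ one has $\tilde\varpi_n^2+(\tilde\varpi_n')^2>K$ everywhere, so either $\int_a^b\tilde\varpi_n^2\ge\tfrac K2(b-a)$, whence $\int_a^b\tilde\varpi_n^4\ge\tfrac{K^2}4(b-a)$ by Cauchy--Schwarz; or $\int_a^b(\tilde\varpi_n')^2\ge\tfrac K2(b-a)$, and then, since $\int_a^b(\tilde\varpi_n')^2=[\tilde\varpi_n\tilde\varpi_n']_a^b-\int_a^b\tilde\varpi_n\tilde\varpi_n''$ with the boundary term $\le\xi^2K$, either $\int_a^b(\tilde\varpi_n''+\tilde\varpi_n)^2$ is already of order $K(b-a)$, or it is small, which forces $\int_a^b\tilde\varpi_n\tilde\varpi_n''$ to be large and negative while $(\sqrt{\int(\tilde\varpi_n'')^2}-\sqrt{\int\tilde\varpi_n^2})^2$ stays small, making $\int_a^b\tilde\varpi_n^2$, hence $\int_a^b\tilde\varpi_n^4$, of order $K^2(b-a)$. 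Choosing $K$ so large that $\delta(K)>c_0$ and combining the two estimates with minimality forces $b-a$ to be bounded by a constant, a contradiction; this proves the spacing statement with a uniform $L$.

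With the spacing statement in hand the bound follows quickly. Fix any $r\in\mathbb R$, and choose $p\in G_n\cap[r-1-L,r-1]$ and $q\in G_n\cap[r+1,r+1+L]$, so $p<r<q$ and $2\le q-p\le 2L+2$. Comparing $\tilde\varpi_n$ on $[p,q]$ with the same cubic--flat--cubic competitor gives $I_{p,q}(f_{\lambda_n};\tilde\varpi_n)\le C_{\mathrm{splice}}(K)$; expanding via integration by parts on $[p,q]$, where the boundary term is $\le\xi^2K$ because $p,q\in G_n$, and using $h-\lambda x\ge\tfrac1{48}x^4-c_0$, one gets a uniform bound on $\|\tilde\varpi_n''+\tilde\varpi_n\|_{L^2(p,q)}$ and $\|\tilde\varpi_n\|_{L^4(p,q)}$, hence on $\|\tilde\varpi_n\|_{L^2(p,q)}$ and $\|\tilde\varpi_n''\|_{L^2(p,q)}$, and then, by the Gagliardo--Nirenberg inequality on the interval $(p,q)$ whose length is bounded and bounded away from $0$, on $\|\tilde\varpi_n'\|_{L^2(p,q)}$ as well; thus $\|\tilde\varpi_n\|_{W^{2,2}(p,q)}\le C'$ uniformly in $n$. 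The one-dimensional embedding $W^{2,2}(p,q)\hookrightarrow C^1([p,q])$, with embedding constant depending only on $q-p\in[2,2L+2]$, then gives $|\tilde\varpi_n(r)|+|\tilde\varpi_n'(r)|\le C$; since $r$ was arbitrary and $C$ does not depend on $n$, this is the asserted bound $\|\tilde\varpi_n\|_{W^{1,\infty}(\mathbb R)}\le C$.
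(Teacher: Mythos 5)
Your proof is correct, and it is genuinely more self-contained than the paper's. The two arguments share the same opening move: the coercivity bound $f_{\lambda_n}(x,y,z)\geqslant \tilde a_1 z^2-\tilde a_2 y^2+\tilde a_3 x^4-\tilde a_4$ with constants uniform in $n$ because $\{\lambda_n\}$ is bounded. After that they diverge. The paper locates a sequence of unit intervals $[T_k,T_k+1]$ on which $I(f_{\lambda_n};\varpi)\leqslant \min+1$ and then cites \cite[p.171, Remark]{L&M} and \cite{Zas} for the conclusion $|\mathcal{V}_{\varpi}(t)|\leqslant M$; note that bounded energy on a sequence of unit intervals tending to infinity does not by itself control $\mathcal{V}_{\varpi}(t)$ at \emph{every} $t$, so the real work is hidden in those references. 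You instead open that black box: you use periodicity to get period-averaged bounds on $\|\tilde\varpi_n''+\tilde\varpi_n\|_{L^2}$ and $\|\tilde\varpi_n\|_{L^4}$, define the good set $G_n$, and exploit the local minimality of $\tilde\varpi_n$ (Definition \ref{strong_m}(i), which is legitimate here since the paper's convention is that all minimizers are strongly optimal) against a cubic--flat--cubic splice to show that gaps in $G_n$ have uniformly bounded length; the same splice then bounds the energy, hence the $W^{2,2}$ norm and by Sobolev embedding the $C^1$ norm, on an interval of controlled length around any point. This is essentially the Leizarowitz--Mizel comparison technique reconstructed from scratch, and it buys a complete proof at the cost of length. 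Two small points you should tighten in a final write-up: the case analysis behind the claim that the bad-arc integral is at least $\delta(K)(b-a)$ with $\delta(K)\to\infty$ deserves an explicit inequality (for instance: with $A=\int\tilde\varpi_n^2$, $B=\int(\tilde\varpi_n'')^2$, $P=-\int\tilde\varpi_n\tilde\varpi_n''\geqslant\tfrac{K}{2}(b-a)-K$ and $\varepsilon=\int(\tilde\varpi_n''+\tilde\varpi_n)^2=A+B-2P$, if $\varepsilon<P$ then $A+B<3P$ and $AB\geqslant P^2$ force $A>P/3$, so either $\varepsilon\geqslant P$ or $\int\tilde\varpi_n^4\geqslant A^2/(b-a)\gtrsim K^2(b-a)$); and the phrase ``making $\int\tilde\varpi_n^2$ of order $K^2(b-a)$'' should read order $K(b-a)$, with the $K^2(b-a)$ reserved for $\int\tilde\varpi_n^4$. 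Neither affects the validity of the argument.
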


\begin{proof}%
\begin{align*}
&  f_{\lambda_{n}}\left(  x,y,z\right) \\
&  =\frac{\xi^{2}}{2}z^{2}-\xi^{2}y^{2}-\lambda_{n}x+\frac{\tau-\xi^{2}}%
{2}x^{2}-\frac{\gamma}{6}x^{3}+\frac{1}{24}x^{4}\\
&  =\frac{\xi^{2}}{2}z^{2}-\xi^{2}y^{2}+\left(  x-\frac{\lambda_{n}}%
{2}\right)  ^{2}+\frac{\tau-\xi^{2}-2}{2}x^{2}-\frac{\gamma}{6}x^{3}+\frac
{1}{24}x^{4}-\frac{\lambda_{n}^{2}}{4}\\
&  \geqslant\frac{\xi^{2}}{2}z^{2}-\xi^{2}y^{2}+\frac{\tau-\xi^{2}-2}{2}%
x^{2}-\frac{\gamma}{6}x^{3}+\frac{1}{24}x^{4}-\frac{\lambda_{n}^{2}}{4}\\
&  \geqslant\frac{\xi^{2}}{2}z^{2}-\xi^{2}y^{2}+c_{1}x^{4}-c_{2},
\end{align*}
where $c_{1},$ $c_{2}$ are independent of $n$ and depends on $\xi,$ $\tau$ and
$\gamma$ only. Now, substituting $\varpi$ into these functionals, we have, for
$\forall\varpi\in W^{2,1}\left(  \left[  T_{1},T_{2}\right]  \right)
,W^{2,2}\left(  \left[  T_{1},T_{2}\right]  \right)  $
\begin{align*}
&  I_{T_{1},T_{2}}\left(  f_{\lambda_{n}},\varpi\right) \\
&  =\int_{T_{1}}^{T_{2}}f_{\lambda_{n}}\left(  \varpi,\varpi^{\prime}%
,\varpi^{\prime\prime}\right)  dt\\
&  \geqslant\int_{T_{1}}^{T_{2}}\left\{  \frac{\xi^{2}}{2}\left\vert
\varpi^{\prime\prime}\right\vert ^{2}-\xi^{2}\left\vert \varpi^{\prime
}\right\vert ^{2}+c_{1}\left\vert \varpi\right\vert ^{4}-c_{2}\right\}  ,
\end{align*}
an application of the interpolation inequality shows, there are positive
constants $\tilde{a}_{1}$ and $\tilde{a}_{2}$ such that%
\begin{align*}
&  I_{T_{1},T_{2}}\left(  f_{\lambda_{n}},\varpi\right) \\
&  \geqslant\int_{T_{1}}^{T_{2}}\left\{  \tilde{a}_{1}\left\vert
\varpi^{\prime\prime}\right\vert ^{2}+\tilde{a}_{2}\left\vert \varpi
\right\vert ^{4}-\tilde{c}_{2}\right\}  ,\text{ }\forall\varpi\in
W^{2,2}\left(  \left[  T_{1},T_{2}\right]  \right)  .\\
&  \geqslant\int_{T_{1}}^{T_{2}}\left\{  \tilde{a}_{1}\left\vert
\varpi^{\prime\prime}\right\vert ^{2}+\tilde{a}_{2}\left\vert \varpi
\right\vert ^{2}-\bar{c}_{2}\right\}  ,\text{ }\forall\varpi\in W^{2,2}\left(
\left[  T_{1},T_{2}\right]  \right)  .
\end{align*}
Note in the last inequality we employed $x^{\alpha}-x+1\geqslant0$ $\left(
\alpha>1,\forall x\geqslant0\right)  $. The Sobolev imbedding theoerm yields%
\begin{align*}
&  I_{T_{1},T_{2}}\left(  f_{\lambda_{n}},\varpi\right) \\
&  \geqslant\int_{T_{1}}^{T_{2}}\left\{  a_{1}\left\vert \varpi^{\prime
}\right\vert ^{2}+a_{2}\left\vert \varpi\right\vert ^{2}-a_{3}\right\}
,\text{ }\forall\varpi\in W^{2,1}\left(  \left[  T_{1},T_{2}\right]  \right)
,
\end{align*}
where $a_{1},$ $a_{2}$ and $a_{3}$ are positive constants, note they are
independent of $n$.%
\[
\lim_{T_{2}\rightarrow\infty}\frac{1}{T_{2}-T_{1}}I_{T_{1},T_{2}}\left(
f_{\lambda_{n}},\varpi\right)  =\min,
\]
then there is $T^{\prime}>T_{1}$ satisfying%
\[
I_{T^{\prime},T^{\prime}+1}\left(  f_{\lambda_{n}},\varpi\right)
\leqslant\min+1,
\]
the process can be proceeded to obtain a sequence $T_{n}\rightarrow\infty,$%
\[
I_{T_{n},T_{n}+1}\left(  f_{\lambda_{n}},\varpi\right)  \leqslant\min+1,
\]
by \cite[p171, Remark]{L&M} and \cite{Zas},%
\[
\left\vert \mathcal{V}_{\varpi}\left(  t\right)  \right\vert \leqslant M.
\]

\end{proof}

\begin{lemma}
[\cite{M&Z1}]\label{structure_per}Suppose that $\tilde{\varpi}$ is a periodic,
nontrivial minimizer for $\mathbb{P}\left(  f\right)  $, $\tau$ being its
minimal period. By an appropriate shift of variable, we may suppose%
\[
\tilde{\varpi}\left(  0\right)  =\min_{s\in\mathbb{R}}\tilde{\varpi}\left(
s\right)  ,
\]
then, there is $\tilde{\tau}\in\left(  0,\tau\right)  ,$ $\tilde{\varpi}$ is
strictly increasing on $\left[  0,\tilde{\tau}\right]  $ while strictly
decreasing on $\left[  \tilde{\tau},\tau\right]  $.
\end{lemma}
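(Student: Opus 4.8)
The plan is to pass from the variational characterization to the Euler--Lagrange ODE, to exploit the reflection symmetry $\varpi(t)\mapsto\varpi(2c-t)$ of the problem so as to cut and reassemble $\tilde{\varpi}$, and then to rule out more than one ``bump'' per minimal period by comparing $\tilde{\varpi}$ with the competitors produced by this surgery.

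First, since $\tilde{\varpi}$ is a (strongly optimal) minimizer, Definition \ref{strong_m}(i) says that on every bounded interval $\tilde{\varpi}$ realizes the fixed-endpoint infimum $\zeta_T(f;\cdot,\cdot)$; hence $\tilde{\varpi}$ solves, in the classical sense, the fourth-order Euler--Lagrange equation attached to (\ref{intgrd}), an autonomous equation whose nonlinearity is a polynomial. Bootstrapping, $\tilde{\varpi}$ is real-analytic; being nonconstant and $\tau$-periodic, $\tilde{\varpi}'$ is a nontrivial real-analytic $\tau$-periodic function, so it changes sign finitely often per period, say at $2N$ points $0=s_0<s_1<\dots<s_{2N-1}<\tau$ (an even number, because the local extrema of $\tilde{\varpi}$ alternate around the circle). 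Between consecutive sign changes $\tilde{\varpi}$ is strictly monotone, the monotonicity alternating, and after the normalization $\tilde{\varpi}(0)=\min\tilde{\varpi}$ the function is strictly increasing on $[s_0,s_1]$. The lemma is exactly the assertion that $N=1$.

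The key tool is that the integrand (\ref{intgrd}) depends on $y$ only through $y^{2}$, so that $\varpi(t)\mapsto\varpi(2c-t)$ sends solutions to solutions and leaves $I_{a,b}(f;\cdot)$ unchanged for every interval. This permits the following surgery: given two consecutive sign changes $a<b$ of $\tilde{\varpi}'$, the function equal to $\tilde{\varpi}$ on $[a,b]$, to $\tilde{\varpi}(2b-\cdot)$ on $[b,2b-a]$, and extended $2(b-a)$-periodically, belongs to $E$, is periodic and $C^{1}$ (the junctions lie at sign changes of $\tilde{\varpi}'$, where $\tilde{\varpi}'$ vanishes by continuity, so the one-sided derivatives agree), and its average energy equals the average of $f(\tilde{\varpi},\tilde{\varpi}',\tilde{\varpi}'')$ over $[a,b]$. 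Running this construction on the arc $[a,b]$ and on its complement $[b,a+\tau]$, and noting that periodicity gives $\tfrac{1}{\tau}I_{0,\tau}(f;\tilde{\varpi})=\psi_f$ while $\psi_f$ is the infimal average energy, one finds that the averages of $\tilde{\varpi}$ over the two complementary arcs are each $\ge\psi_f$ with length-weighted mean $\psi_f$; hence every arc of $\tilde{\varpi}$ between consecutive sign changes carries average energy exactly $\psi_f$, and each ``reflection-double'' built above is itself a periodic minimizer.

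Finally, suppose $N\ge 2$, so that $\tilde{\varpi}$ has two or more bumps per minimal period; we seek a contradiction with the minimality of $\tau$. Reflection-doubling the individual ascending arcs of $\tilde{\varpi}$ yields unimodal periodic minimizers with prescribed extreme values; the aim is to conclude, using the rigidity of periodic minimizers encoded in the decomposition $T\zeta_T(f;x,y)=T\psi_f+\pi_f(x)-\pi_f(y)+\theta_T(f;x,y)$ with $\theta_T\ge 0$ (along $\tilde{\varpi}$ one has $\theta_T\equiv 0$ and $\pi_f\circ\mathcal{V}_{\tilde{\varpi}}$ is $\tau$-periodic), together with Lemma \ref{convexity}, that all the bumps of $\tilde{\varpi}$ must be mutually congruent --- whence $\tilde{\varpi}$ would have period $\tau/N<\tau$, a contradiction. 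I expect this rigidity step to be the main obstacle: the reflection surgery alone only delivers competitors whose average energy is $\le\psi_f$, so forcing a genuine contradiction requires excluding the degenerate scenario of \emph{unequal} bumps each realizing exactly the critical average $\psi_f$, and it is here that the finer structure theory above --- or, as an alternative, a Jacobi/second-variation argument applied to the fixed-endpoint subproblems, exploiting that $\tilde{\varpi}'$ solves the linearized equation --- is indispensable. The first two steps, by contrast, are routine.
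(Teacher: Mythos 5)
There is a genuine gap, and you have located it yourself: the whole content of the lemma is the assertion $N=1$, and your argument stops exactly where that assertion would have to be proved. The preparatory steps are sound --- $\tilde{\varpi}$ solves the autonomous fourth-order Euler--Lagrange equation, is real-analytic, hence has finitely many monotonicity arcs per period; the reflection surgery is legitimate because $f$ in $\left(\ref{intgrd}\right)$ is even in $y$ and the junctions occur where $\tilde{\varpi}'$ vanishes; and the weighted-mean argument correctly shows that every arc between consecutive critical points carries average energy exactly $\psi_{f}$, so that every reflection-double lies in $\Theta\left(f\right)$. But this last observation is precisely why the surgery cannot finish the job: it only ever produces competitors whose average energy \emph{equals} $\psi_{f}$, never one that strictly beats $\tilde{\varpi}$, so the scenario of $N\geqslant2$ unequal bumps each realizing the critical average is exactly the case left untouched. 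The proposed completion --- that all bumps must be mutually congruent, whence the minimal period would be $\tau/N$ --- is itself a nontrivial rigidity statement that you do not establish, and neither the $\pi_{f}/\theta_{T}$ decomposition nor the Jacobi-field remark is developed into an argument.

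For comparison: the paper does not prove this lemma either; it is imported from \cite{M&Z1}. The structural input there (already present in \cite{L&M} for the unconstrained problem), which your outline never invokes, is that the orbit $s\longmapsto\mathcal{V}_{\tilde{\varpi}}\left(s\right)=\left(\tilde{\varpi}\left(s\right),\tilde{\varpi}'\left(s\right)\right)$ of a periodic minimizer is a \emph{simple closed curve} in $\mathbb{R}^{2}$. Granting that, the conclusion is an elementary planar argument: the orbit splits into $N$ arcs in the upper half-plane and $N$ arcs in the lower half-plane, each with endpoints on the axis $\left\{y=0\right\}$ at consecutive critical values of $\tilde{\varpi}$; for $N\geqslant2$ the non-crossing condition forces the base intervals of the upper arcs to be nested while the ordering of the critical values (the normalization $\tilde{\varpi}\left(0\right)=\min\tilde{\varpi}$ making the first one extremal) then forces two of the lower arcs' base intervals to interleave, so two lower arcs must cross --- a contradiction. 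The missing ingredient is therefore the non-self-intersection property, not a congruence-of-bumps rigidity; a self-contained proof should establish that property first rather than pursue the route you sketch.
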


\begin{lemma}
[\cite{M&Z1}]\label{perfect_m}Suppose $\varpi\in E$ possessing Property A,
then for any $\varsigma\in\Omega\left(  \varpi\right)  ,$ there is
$\bar{\varpi}\in E$ possessing Property B such that%
\[
\left\{  \left.  \mathcal{V}_{\bar{\varpi}}\left(  s\right)  \right\vert
s\in\mathbb{R}\right\}  \subset\Omega\left(  \varpi\right)  ,V_{\bar{\varpi}%
}\left(  0\right)  =\varsigma,
\]
where%
\[
\Omega\left(  \varpi\right)  =\left\{  \left.  \nu\in\mathbb{R}^{2}\right\vert
\exists s_{n}\rightarrow\infty,\mathcal{V}_{\varpi}\left(  s_{n}\right)
\rightarrow\nu\right\}  .
\]

\end{lemma}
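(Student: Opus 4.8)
The plan is to realize $\varsigma$ as the initial datum of a full limiting orbit obtained by shifting $\varpi$ along a sequence escaping to $+\infty$, and then to upgrade the limit's optimality from the mere boundedness of Property A to the exact calibration of Property B. Since $\varsigma\in\Omega(\varpi)$, fix $s_n\to+\infty$ with $\mathcal{V}_\varpi(s_n)\to\varsigma$ and set $\varpi_n(t):=\varpi(t+s_n)$, so that $\mathcal{V}_{\varpi_n}(0)=\mathcal{V}_\varpi(s_n)\to\varsigma$. Because $s_n\to\infty$, for every fixed $t\in\mathbb{R}$ the shift $t+s_n$ is eventually positive, so the $\varpi_n$ are defined on intervals exhausting all of $\mathbb{R}$; the candidate $\bar\varpi$ will therefore be a full orbit on $\mathbb{R}$ whose restriction lies in $E$.

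First I would establish local compactness of $\{\varpi_n\}$. By additivity of the action, $I_{s,s+L}(f;\varpi)=I_{0,s+L}(f;\varpi)-I_{0,s}(f;\varpi)$, and Property A keeps each bracket within $O(1)$ of the linear profile $T\psi_f$; hence $I_{s,s+L}(f;\varpi)$ is bounded uniformly in $s$ for each fixed $L$. Feeding this into the coercive lower bound for $f$ obtained exactly as in the proof of Theorem \ref{bound_w1} with $\lambda=0$, namely $I_{s,s+L}(f;\varpi)\geq \tilde a_1\int_s^{s+L}|\varpi''|^2-\tilde c\,L$, yields a uniform $L^2$ bound on $\varpi_n''$ over every compact interval; together with the ambient $W^{1,\infty}(\mathbb{R}^+)$ bound on $\varpi$ this makes $\{\varpi_n\}$ bounded in $W^{2,2}_{loc}(\mathbb{R})$. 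A diagonal extraction then produces a subsequence converging weakly in $W^{2,2}_{loc}$ and, by the compact embedding $W^{2,2}(I)\hookrightarrow C^1(\bar I)$, strongly in $C^1_{loc}(\mathbb{R})$ to some $\bar\varpi\in E$. The $C^1$ convergence gives at once $\mathcal{V}_{\bar\varpi}(0)=\lim_n\mathcal{V}_{\varpi_n}(0)=\varsigma$, and for each fixed $s$ we have $\mathcal{V}_{\bar\varpi}(s)=\lim_n\mathcal{V}_\varpi(s+s_n)$ with $s+s_n\to\infty$, so $\mathcal{V}_{\bar\varpi}(s)\in\Omega(\varpi)$. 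This settles both the inclusion $\{\mathcal{V}_{\bar\varpi}(s):s\in\mathbb{R}\}\subset\Omega(\varpi)$ and the prescribed value at $s=0$.

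The heart of the argument is verifying Property B for $\bar\varpi$. Introduce the defect $\delta(a,b):=I_{a,b}(f;\varpi)-(b-a)\psi_f-\pi_f(\mathcal{V}_\varpi(a))+\pi_f(\mathcal{V}_\varpi(b))$, which is nonnegative since the decomposition theorem $T\zeta_T=T\psi_f+\pi_f(x)-\pi_f(y)+\theta_T$ with $\theta_T\geq 0$ gives $I_{a,b}(f;\varpi)\geq(b-a)\zeta_{b-a}\geq(b-a)\psi_f+\pi_f(\mathcal{V}_\varpi(a))-\pi_f(\mathcal{V}_\varpi(b))$, and which telescopes additively, $\delta(a,c)=\delta(a,b)+\delta(b,c)$, because the $\pi_f$ endpoints cancel. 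Property A bounds $I_{0,T}-T\psi_f$, while $\varpi\in W^{1,\infty}$ keeps $\mathcal{V}_\varpi(T)$ in a compact set on which the continuous $\pi_f$ is bounded; hence $\delta(0,T)$ is bounded in $T$. Being nonnegative, additive and bounded, $\delta(0,T)$ converges as $T\to\infty$, so the increments over far-out windows vanish: $\delta(s_n+T_1,s_n+T_2)=\delta(0,s_n+T_2)-\delta(0,s_n+T_1)\to 0$. Rewriting this increment through the shift as $\delta(s_n+T_1,s_n+T_2)=I_{T_1,T_2}(f;\varpi_n)-(T_2-T_1)\psi_f-\pi_f(\mathcal{V}_{\varpi_n}(T_1))+\pi_f(\mathcal{V}_{\varpi_n}(T_2))$, I would pass to the limit: the $\pi_f$ terms converge by the $C^1$ convergence of the endpoints, while the action is weakly lower semicontinuous, its only top-order piece $\tfrac{\xi^2}{2}\int|\varpi''|^2$ being convex hence weakly lsc in $\varpi''$ and the remaining terms depending only on $\varpi,\varpi'$, which converge strongly. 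This yields $\bar\delta(T_1,T_2):=I_{T_1,T_2}(f;\bar\varpi)-(T_2-T_1)\psi_f-\pi_f(\mathcal{V}_{\bar\varpi}(T_1))+\pi_f(\mathcal{V}_{\bar\varpi}(T_2))\leq\lim_n\delta(s_n+T_1,s_n+T_2)=0$; combined with $\bar\delta\geq 0$ from the decomposition theorem, $\bar\delta(T_1,T_2)=0$ on every bounded interval, which is exactly Property B.

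The main obstacle is this last calibration step, and within it two points deserve care: extracting from the global boundedness of Property A the \emph{summable} control that forces the windowed defects to vanish, which is where the additivity and nonnegativity of $\delta$ are essential; and checking that the action is genuinely lower semicontinuous along the weakly convergent shifts, which hinges on isolating the convex top-order term and on the strong $C^1_{loc}$ convergence of the lower-order arguments, itself resting on the coercivity estimate imported from Theorem \ref{bound_w1}.
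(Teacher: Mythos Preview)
The paper does not supply its own proof of this lemma: it is merely quoted from \cite{M&Z1}, so there is no in-paper argument to compare against. Your proposal is a faithful reconstruction of the standard Marcus--Zaslavski argument and is correct in outline and detail. The two essential mechanisms you identify---local $W^{2,2}$ compactness of the shifted sequence via Property~A plus the coercive lower bound on $f$, and the calibration step through the nonnegative, additive, uniformly bounded defect $\delta(a,b)$ whose increments over far-out windows must therefore vanish---are exactly the ingredients used in \cite{M&Z1}. Your handling of the passage to the limit (weak lower semicontinuity of the convex top-order term, strong $C^1_{loc}$ convergence of the lower-order arguments, continuity of $\pi_f$) is also the correct way to close the argument, yielding $\bar\delta(T_1,T_2)\le 0$ and hence $\bar\delta\equiv 0$.

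One small point of bookkeeping: the sign convention in your $\delta$ differs from the expression displayed under Property~B in the paper (which, as written, is incomplete---the ``$=0$'' is missing). Your convention $\delta(a,b)=I_{a,b}-(b-a)\psi_f-\pi_f(\mathcal V_\varpi(a))+\pi_f(\mathcal V_\varpi(b))$ is the one consistent with the decomposition theorem $T\zeta_T=T\psi_f+\pi_f(x)-\pi_f(y)+\theta_T$, $\theta_T\ge 0$, so it is the right choice; just be aware of the mismatch with the paper's display when you cite Property~B.
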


We are now in a position to prove the main theorem,

\section{Existence of minimizer}

This whole section is devoted to the proof of the main result.

Assume that $0$ is an extreme point of $\psi_{f}\left(  x\right)  $, then
there exists a sequence of exposed points $\left\{  \theta_{n}\right\}  $ of
$\psi_{f}\left(  x\right)  $ tending to $0$. Without loss of generality, we
may suppose that the sequence $\left\{  \theta_{n}\right\}  $ is
non-increasing and $\theta_{n}\neq0,\forall n$. Since each $\theta_{n}$ is an
exposed point, we have an optimal periodic solution $\tilde{\varpi}_{n}$ to
the constrained minimization problem $\mathbb{P}\left(  f;\theta_{n}\right)
$. Denote by $\tau_{n}$ the minimal positive period of $\tilde{\varpi}_{n}$.

The crux of the problem concentrates on the sequence of periods $\tau_{n}$. In
fact, if we show $\left\{  \tau_{n}\right\}  $ is bounded, then we can easily
derive a minimizer from the minimizing sequence $\left\{  \tilde{\varpi}%
_{n}\right\}  $. The condition $\left(  \ref{inequ_rcp}\right)  $ not only
guarantees the boundedness of $\left\{  \tau_{n}\right\}  $, but also ensures
us a nontrivial optimal solution. To prove the theorem, we employ an argument
of contradiction. That is, if $\left\{  \tau_{n}\right\}  $ is unbounded, then
we will reach the conclusion that, for $\forall\varepsilon>0,$ there is
$0\geqslant\varrho\geqslant-\varepsilon$ satisfying $\psi_{f}\left(
\varrho\right)  =f\left(  \varrho,0,0\right)  ,$ this obviously contradicts
condition $\left(  \ref{inequ_rcp}\right)  $. To make the it easier to follow,
we carry out the proof in several steps.

\begin{lemma}
The sequence $\left\{  \tilde{\varpi}_{n}\right\}  _{n\geqslant1}$ does not
admit a subsequence $\left\{  \tilde{\varpi}_{n_{k}}\right\}  _{k\geqslant1}$
that are all constant.
\end{lemma}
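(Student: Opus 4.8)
The plan is to argue by contradiction, cashing in the hypothesis $(\ref{inequ_rcp})$ already at this early stage. Suppose, to the contrary, that there were a subsequence $\{\tilde{\varpi}_{n_k}\}_{k\geqslant 1}$ consisting entirely of constant functions, say $\tilde{\varpi}_{n_k}\equiv c_k$. Each $\tilde{\varpi}_{n_k}$ is admissible for the constrained problem $\mathbb{P}\left(f;\theta_{n_k}\right)$, hence obeys the mean-value constraint $\left[\tilde{\varpi}_{n_k}\right]=\theta_{n_k}$; since the mean of a constant function equals that constant, this forces $c_k=\theta_{n_k}$, i.e. $\tilde{\varpi}_{n_k}\equiv\theta_{n_k}$.

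Next I would evaluate the energy on such a constant. Because $\tilde{\varpi}_{n_k}^{\prime}\equiv\tilde{\varpi}_{n_k}^{\prime\prime}\equiv 0$, we have $J_T\left(f;\tilde{\varpi}_{n_k}\right)=f\left(\theta_{n_k},0,0\right)$ for every $T>0$; and since $\tilde{\varpi}_{n_k}$ is an optimal solution of $\mathbb{P}\left(f;\theta_{n_k}\right)$, this gives
\[
\psi_f\left(\theta_{n_k}\right)=f\left(\theta_{n_k},0,0\right)\geqslant m_f ,
\]
the last inequality being nothing but the definition of $m_f$.

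Finally I would pass to the limit $k\to\infty$. By hypothesis $\theta_n\to 0$, so $\theta_{n_k}\to 0$; since $\psi_f$ is convex and therefore continuous, letting $k\to\infty$ in the display above yields $\psi_f\left(0\right)\geqslant m_f$, which flatly contradicts the standing hypothesis $(\ref{inequ_rcp})$, namely $\psi_f\left(0\right)<m_f$. Hence only finitely many of the $\tilde{\varpi}_n$ can be constant, which is exactly the assertion of the lemma.

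The argument is short and there is no serious obstacle; the only point worth stating with care is that the zero-mean-type constraint $\left[\varpi\right]=a$ pins an admissible constant function down to the value $a$, so that constancy of $\tilde{\varpi}_{n_k}$ (together with optimality) forces $\psi_f$ to equal $f(\cdot,0,0)$ along $\theta_{n_k}\to 0$, which is incompatible with $\psi_f$ lying strictly below $m_f$ at $0$. One should simply be sure to invoke the continuity of $\psi_f$, already recorded above, when taking the limit.
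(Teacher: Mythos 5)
Your proof is correct and follows essentially the same route as the paper: the mean constraint pins the constant to $\theta_{n_k}$, optimality gives $\psi_f(\theta_{n_k})=f(\theta_{n_k},0,0)$, and continuity of the convex function $\psi_f$ yields a contradiction with $(\ref{inequ_rcp})$. Your minor variant of bounding $f(\theta_{n_k},0,0)\geqslant m_f$ before passing to the limit (rather than concluding $\psi_f(0)=f(0,0,0)$ as the paper does) is equally valid.
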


\begin{proof}
Suppose otherwise that $\left\{  \tilde{\varpi}_{n_{k}}\right\}  $ are all
constant functions. Since $\left[  \tilde{\varpi}_{n_{k}}\right]
=\theta_{n_{k}}$, then $\tilde{\varpi}_{n_{k}}\left(  t\right)  \equiv
\theta_{n_{k}}$. Each function $\tilde{\varpi}_{n_{k}}\left(  t\right)  $ is
an optimal solution of $\mathbb{P}\left(  f,\theta_{n_{k}}\right)  $, hence
\[
\psi_{f}\left(  \theta_{n_{k}}\right)  =\frac{1}{\tau_{n}}\int_{0}^{\tau_{n}%
}f\left(  \tilde{\varpi}_{n_{k}},\tilde{\varpi}_{n_{k}}^{\prime},\tilde
{\varpi}_{n_{k}}^{\prime\prime}\right)  dt=f\left(  \theta_{n_{k}},0,0\right)
.
\]
But the convex function $\psi_{f}\left(  x\right)  $ is continous at $x=0$,
therefore%
\[
\psi_{f}\left(  0\right)  =f\left(  0,0,0\right)  ,
\]
this contradicts $\left(  \ref{inequ_rcp}\right)  $. Thus we may assume all
$\left\{  \tilde{\varpi}_{n}\left(  t\right)  \right\}  _{n\geqslant1}$ are
not constant functions.

By lemma \ref{convexity}, there is $\lambda_{n}\in\partial\psi_{f}\left(
\theta_{n}\right)  $ such that $\tilde{\varpi}_{n}\in\mathbb{\tilde{M}}\left(
f_{\lambda_{n}}\right)  $ and $f_{\lambda_{n}}=f\left(  u,u^{\prime}%
,u^{\prime\prime}\right)  -\lambda_{n}u$. Since $\theta_{n}$ is non-increasing
and $\psi_{f}\left(  x\right)  $ is a convex function which is finite
everywhere, so $\left\{  \lambda_{n}\right\}  $ must be non-increasing and has
a lower bound, then $\left\{  \lambda_{n}\right\}  $ has a limit point
$\lambda^{\ast}$. By definition of subdifferential,
\[
\psi_{f}\left(  z\right)  \geqslant\psi_{f}\left(  \theta_{n}\right)
+\lambda_{n}\left(  z-\theta_{n}\right)  ,\forall z.
\]
therefore%
\[
\psi_{f}\left(  z\right)  \geqslant\psi_{f}\left(  0\right)  +\lambda^{\ast
}z,\forall z,
\]
namely, $\lambda^{\ast}\in\partial\psi_{f}\left(  0\right)  .$
\end{proof}

\begin{lemma}
\label{bd2&4}The sequence $\left\{  \tilde{\varpi}_{n}\right\}  $ is locally
bounded in $W^{4,2}\left(  \mathbb{R}\right)  ,$ that is, for any compact
interval $\left[  a,b\right]  $, there is a constant $C>0$ depending only on
$b-a,$ such that%
\begin{equation}
\left\Vert \tilde{\varpi}_{n}\right\Vert _{W^{4,2}\left(  \left[  a,b\right]
\right)  }\leqslant C,\forall n. \label{bd_ww2}%
\end{equation}

\end{lemma}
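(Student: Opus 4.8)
The goal is an a priori bound on $\tilde\varpi_n$ in $W^{4,2}$ on compact intervals, uniform in $n$, given that we already control $\tilde\varpi_n$ in $W^{1,\infty}(\mathbb{R})$ by Theorem \ref{bound_w1} (the sequence $\lambda_n$ is bounded since $\theta_n\to 0$ and $\psi_f$ is finite and convex, and $\psi_{f_{\lambda_n}}=\psi_f(\theta_n)-\lambda_n\theta_n$ is then bounded). The natural route is the bootstrap: $W^{1,\infty}$ control plus the Euler--Lagrange equation yields $W^{4,2}$ control. First I would write down the Euler--Lagrange equation for $\mathbb{P}(f_{\lambda_n})$. Since $f_{\lambda_n}(x,y,z)=\tfrac{\xi^2}{2}z^2-\xi^2 y^2+\tfrac{\xi^2+\tau}{2}x^2-\tfrac{\gamma}{6}x^3+\tfrac{1}{24}x^4-\lambda_n x$, the equation is
\[
\xi^2\tilde\varpi_n^{(4)}+2\xi^2\tilde\varpi_n''+(\xi^2+\tau)\tilde\varpi_n-\tfrac{\gamma}{2}\tilde\varpi_n^2+\tfrac{1}{6}\tilde\varpi_n^3-\lambda_n=0,
\]
which says $\tilde\varpi_n^{(4)}$ is an algebraic (polynomial) expression in $\tilde\varpi_n$ and $\tilde\varpi_n''$, with coefficients bounded uniformly in $n$.

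The second step is to upgrade the regularity in stages. From $\|\tilde\varpi_n\|_{W^{1,\infty}}\le C$ we get $\tilde\varpi_n$ and $\tilde\varpi_n'$ bounded in $L^\infty([a,b])$, hence in $L^2([a,b])$ with constant depending only on $b-a$. Testing the weak form of the Euler--Lagrange equation (or directly, since the finite-energy characterization already gives $\tilde\varpi_n\in W^{2,2}$) against $\tilde\varpi_n$ on a slightly larger interval and using a cutoff, the quadratic leading term $\int \tfrac{\xi^2}{2}|\tilde\varpi_n''|^2$ is controlled by the lower-order terms, all of which are now uniformly bounded; this yields a uniform bound on $\|\tilde\varpi_n''\|_{L^2([a,b])}$, i.e. a uniform $W^{2,2}$ bound. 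Alternatively this $W^{2,2}$ bound can be extracted from the coercivity estimate already proved inside Theorem \ref{bound_w1}, combined with Property A (boundedness of $I_T(f_{\lambda_n};\tilde\varpi_n)-T\psi_{f_{\lambda_n}}$) applied on unit intervals. Once $\tilde\varpi_n''\in L^2_{loc}$ uniformly, the right-hand side $-2\xi^2\tilde\varpi_n''-(\xi^2+\tau)\tilde\varpi_n+\tfrac{\gamma}{2}\tilde\varpi_n^2-\tfrac{1}{6}\tilde\varpi_n^3+\lambda_n$ of $\xi^2\tilde\varpi_n^{(4)}=(\cdots)$ is bounded in $L^2_{loc}$ uniformly in $n$ (the cubic and quadratic terms are bounded pointwise by the $L^\infty$ bound on $\tilde\varpi_n$). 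Hence $\tilde\varpi_n^{(4)}$ is bounded in $L^2_{loc}$, and interpolation between $\tilde\varpi_n\in L^2_{loc}$ and $\tilde\varpi_n^{(4)}\in L^2_{loc}$ controls $\tilde\varpi_n',\tilde\varpi_n'',\tilde\varpi_n'''$ in $L^2_{loc}$, giving \eqref{bd_ww2}.

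The one technical point that needs care, and is the main obstacle, is the passage from the unit-interval energy bound to a genuine $W^{2,2}([a,b])$ bound on an \emph{arbitrary} compact interval with constant depending only on its length: one must cover $[a,b]$ by unit (or fixed-length) subintervals, apply the estimate on each, and sum, which is where the uniformity of $c_1,c_2,a_1,a_2,a_3$ from the proof of Theorem \ref{bound_w1} is essential. A second subtlety is that the Euler--Lagrange equation should be justified for the minimizer $\tilde\varpi_n$ — but this is standard once we know $\tilde\varpi_n\in W^{2,2}_{loc}$ and the integrand is smooth with polynomial growth, and in any case only the $L^2_{loc}$ form of the equation is needed for the bootstrap. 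Everything else is routine interpolation (Ehrling / Gagliardo--Nirenberg on a bounded interval), and I would not spell those constants out.
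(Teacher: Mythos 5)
Your proposal follows essentially the same route as the paper: the uniform $W^{1,\infty}$ bound from Theorem \ref{bound_w1}, the coercivity lower bound $f_{\lambda_n}\geqslant \tilde a_1|x|^4-\tilde a_2|y|^2+\tilde a_3|z|^2-\tilde a_4$ to extract a local $W^{2,2}$ bound, and then the Euler--Lagrange equation to bootstrap to $W^{4,2}$ via interpolation. If anything, your version is the more careful one --- you write the Euler--Lagrange equation with the correct coefficients and the $-\lambda_n$ term, and you explicitly flag the need for a uniform-in-$n$ upper bound on the local energy (via Property A or covering by fixed-length subintervals), a point the paper's proof passes over in silence.
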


\begin{proof}
By theorem \ref{bound_w1}, there is a constant $C>0$, such that%
\begin{equation}
\left\Vert \tilde{\varpi}_{n}\right\Vert _{W^{1,\infty}\left(  \mathbb{R}%
\right)  }\leqslant C,\forall n. \label{bd_ww1}%
\end{equation}
Indeed, the function $f\left(  x,y,z\right)  $ allows lower bound of the form%
\[
f\left(  x,y,z\right)  \geqslant a_{1}\left\vert x\right\vert ^{4}%
-a_{2}\left\vert y\right\vert ^{2}+a_{3}\left\vert z\right\vert ^{2}-a_{4},
\]
where $a_{1},a_{2},a_{3},a_{4}\in\mathbb{R}$ are all positive constants.
Noting that $f_{\lambda_{n}}\left(  x,y,z\right)  =f\left(  x,y,z\right)
-\lambda_{n}x$ and $\lambda_{n}$ is bounded, then $f_{\lambda_{n}}$ also
admits the lower bound%
\[
f_{\lambda_{n}}\left(  x,y,z\right)  \geqslant\tilde{a}_{1}\left\vert
x\right\vert ^{4}-\tilde{a}_{2}\left\vert y\right\vert ^{2}+\tilde{a}%
_{3}\left\vert z\right\vert ^{2}-\tilde{a}_{4},
\]
where $\tilde{a}_{1},\tilde{a}_{2},\tilde{a}_{3},\tilde{a}_{4}\in\mathbb{R}$
are all positive constants independent of $n$. Therefore
\[
f_{\lambda_{n}}\left(  \tilde{\varpi}_{n},\tilde{\varpi}_{n}^{\prime}%
,\tilde{\varpi}_{n}^{\prime\prime}\right)  \geqslant\tilde{a}_{1}\left\vert
x\right\vert ^{4}-\tilde{a}_{2}\left\vert y\right\vert ^{2}+\tilde{a}%
_{3}\left\vert z\right\vert ^{2}-\tilde{a}_{4}.
\]
This inequality combining $\left(  \ref{bd_ww1}\right)  $ gives $\left(
\ref{bd_ww2}\right)  $. Furthermore, since $\tilde{\varpi}_{n}$ solves the E-L
equation%
\[
\tilde{\varpi}_{n}^{\prime\prime\prime\prime}+\tilde{\varpi}_{n}^{\prime
\prime}+\tilde{\varpi}_{n}+h^{\prime}\left(  \tilde{\varpi}_{n}\right)  =0.
\]
By $\left(  \ref{rel3}\right)  $ and the boundedness of $\lambda_{n}$,
$\tilde{\varpi}_{n}$ must be bounded in $W^{4,2}\left(  \left[  a,b\right]
\right)  ,$ namely, there is a constant $C>0$ ($C$ depends only on the length
$b-a$ of the interval)$,$ satisfying%
\begin{equation}
\left\Vert \tilde{\varpi}_{n}\right\Vert _{W^{4,2}\left(  \left[  a,b\right]
\right)  }\leqslant C,\forall n. \label{bd_ww4}%
\end{equation}

\end{proof}

\begin{lemma}
If $\tau_{n}\rightarrow\infty$, then $\forall\varepsilon>0,\exists
\tilde{\upsilon}_{\varepsilon}^{\ast}\in\tilde{\Xi}\left(  f_{\lambda^{\ast}%
}\right)  ,$ $\liminf\limits_{s\rightarrow\infty}\tilde{\upsilon}%
_{\varepsilon}^{\ast}\left(  s\right)  \geqslant-\varepsilon.$
\end{lemma}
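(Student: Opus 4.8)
The goal is to extract from the minimizing sequence $\{\tilde{\varpi}_n\}$ of periodic solutions to $\mathbb{P}(f;\theta_n)$ — which have periods $\tau_n\to\infty$ — a periodic strongly optimal solution of the \emph{unconstrained} Lagrangian problem $\mathbb{P}(f_{\lambda^\ast})$ whose values stay essentially nonnegative. First I would invoke Lemma~\ref{convexity}(\ref{rel3}): since each $\theta_n$ is exposed, $\tilde{\varpi}_n\in\tilde{\Xi}(f_{\lambda_n})$, and by Theorem~\ref{bound_w1} the $\tilde{\varpi}_n$ are uniformly bounded in $W^{1,\infty}(\mathbb{R})$ (using boundedness of $\lambda_n$ and of $\psi_{f_{\lambda_n}}$, the latter from continuity of $\psi_f$ near $0$). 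By Lemma~\ref{bd2&4} they are then uniformly bounded in $W^{4,2}$ on every compact interval. The point of the hypothesis $\tau_n\to\infty$ is that, after the normalizing shift from Lemma~\ref{structure_per} placing the minimum of $\tilde{\varpi}_n$ at $0$ so that $\tilde{\varpi}_n$ is increasing on $[0,\tilde{\tau}_n]$, the "ascending half-period" $[0,\tilde{\tau}_n]$ or the subsequent "descending half" grows without bound; on such an interval the graph spends an arbitrarily long time, so a suitably chosen shift can exhibit near-flat behavior.

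Second, I would perform a diagonal/Arzelà–Ascoli extraction. The uniform $W^{4,2}_{loc}$ bound gives $C^{3}_{loc}$ precompactness, so along a subsequence $\tilde{\varpi}_n\to\tilde{\upsilon}^\ast$ in $C^3_{loc}(\mathbb{R})$; since $\lambda_n\to\lambda^\ast$ and each $\tilde{\varpi}_n$ solves the Euler–Lagrange equation $\varpi''''+\varpi''+\varpi+h'(\varpi)-\lambda_n=0$, the limit $\tilde{\upsilon}^\ast$ solves the E-L equation for $f_{\lambda^\ast}$. Standard lower-semicontinuity of $I_T$ together with $\psi_{f_{\lambda_n}}\to\psi_{f_{\lambda^\ast}}$ (continuity of $\lambda\mapsto\psi_{f_\lambda}$, an affine-in-$\lambda$ infimum of continuous functions, or directly from relation~(\ref{rel1})) shows $\tilde{\upsilon}^\ast$ achieves the average energy $\psi_{f_{\lambda^\ast}}$ on every interval, i.e.\ $\tilde{\upsilon}^\ast\in\Xi(f_{\lambda^\ast})$; Property~A (Remark~\ref{good}) and Lemma~\ref{perfect_m} then let me replace it, if necessary, by an element of $\tilde{\Xi}(f_{\lambda^\ast})$ whose orbit lies in $\Omega(\tilde{\upsilon}^\ast)$.

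Third — and this is where $\tau_n\to\infty$ is really used — I must arrange the shift \emph{before} taking the limit so that $\liminf_{s\to\infty}\tilde{\upsilon}^\ast_\varepsilon(s)\geq-\varepsilon$. The idea: fix $\varepsilon>0$. By Lemma~\ref{structure_per}, on $[0,\tilde{\tau}_n]$ each $\tilde{\varpi}_n$ increases from its minimum $m_n=\min\tilde{\varpi}_n$ to its maximum. The mean value $[\tilde{\varpi}_n]=\theta_n\to 0$, and the uniform $W^{1,\infty}$ bound controls the slope, so $m_n$ is bounded below; if $\limsup m_n \geq -\varepsilon$ we are essentially done by shifting to where $\tilde{\varpi}_n$ is near its (nonnegative-ish) minimum and passing to the limit. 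If instead $m_n$ stays well below $-\varepsilon$, then because $\theta_n\to0$ and the period $\tau_n\to\infty$, the function must spend a long time at values $\geq -\varepsilon$ (otherwise the average could not approach $0$ with a bounded, slowly varying profile); choose the shift $\sigma_n$ to land in such a long interval where $\tilde{\varpi}_n\geq-\varepsilon$ whose length tends to $\infty$. After shifting, the $C^3_{loc}$ limit $\tilde{\upsilon}^\ast_\varepsilon$ then satisfies $\tilde{\upsilon}^\ast_\varepsilon\geq-\varepsilon$ on a half-line, hence $\liminf_{s\to\infty}\tilde{\upsilon}^\ast_\varepsilon(s)\geq-\varepsilon$; this is the desired element of $\tilde{\Xi}(f_{\lambda^\ast})$.

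\textbf{Main obstacle.} The delicate point is the third step: quantitatively converting "$\tau_n\to\infty$ and $[\tilde{\varpi}_n]=\theta_n\to0$ with uniformly bounded, uniformly Lipschitz profile" into "there is an interval of length $\to\infty$ on which $\tilde{\varpi}_n\geq-\varepsilon$, and we can center our shift there while keeping E-L convergence." One has to make sure the chosen shift $\sigma_n$ still permits the compactness extraction (it does, since the $W^{4,2}_{loc}$ bound is translation-invariant) and that the resulting limit is genuinely in $\tilde{\Xi}(f_{\lambda^\ast})$ rather than merely a stationary point — this is handled by combining lower semicontinuity with $\psi_{f_{\lambda_n}}\to\psi_{f_{\lambda^\ast}}$ and then applying Lemma~\ref{perfect_m}. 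I expect the length-estimate (a pigeonhole on the interval $[0,\tilde{\tau}_n]$ partitioned according to whether $\tilde{\varpi}_n\gtrless -\varepsilon$, using that the average is near $0$) to be the only genuinely new computation; everything else is assembling the cited lemmas.
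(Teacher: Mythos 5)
Your proposal is correct and follows essentially the same route as the paper: normalize via Lemma~\ref{structure_per}, use the averaging/pigeonhole estimate (the paper bounds $\theta_{n}\leqslant C\left\vert \kappa_{n}\right\vert /\tau_{n}+\left(  \theta_{n}-\varepsilon\right)  \left(  1-\left\vert \kappa_{n}\right\vert /\tau_{n}\right)  $ to get $\liminf\left\vert \kappa_{n}\right\vert /\tau_{n}>0$, hence a monotone subinterval of length tending to infinity on which $\tilde{\varpi}_{n}\geqslant\theta_{n}-\varepsilon$), shift to its left endpoint, and extract a $C_{loc}^{3}$ limit from the $W_{loc}^{4,2}$ bound while passing the finite-interval minimality relation to the limit. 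The computation you flag as the "main obstacle" is exactly the one the paper carries out, and your sketch of it is the right one.
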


\begin{proof}
Since the functional $I_{T_{1},T_{2}}\left(  f,\varpi\right)  $ is independent
of the time variable $t$, we may assume%
\[
\tilde{\varpi}_{n}\left(  0\right)  =\min_{s\in\mathbb{R}}\tilde{\varpi}%
_{n}\left(  s\right)  ,\forall n.
\]
Then $\tilde{\varpi}_{n}\left(  0\right)  <\theta_{n}$. By lemma
\ref{structure_per}, there is $\bar{\tau}_{n}\in\left(  0,\tau_{n}\right)  $,
$\tilde{\varpi}_{n}$ is strictly increasing on $\left(  0,\bar{\tau}%
_{n}\right)  $ and strictly decreasing on $\left(  \bar{\tau}_{n},\tau
_{n}\right)  $. Consider the set%
\[
\kappa_{n}=\left\{  \left.  s\in\left(  0,\tau_{n}\right)  \right\vert
\tilde{\varpi}_{n}\left(  s\right)  \geqslant\theta_{n}-\varepsilon\right\}
,
\]
by virtue of monotonic structure of $\tilde{\varpi}_{n}$, $\kappa_{n}$ may be
written as%
\[
\kappa_{n}=\left[  \check{\kappa}_{n},\bar{\tau}_{n}\right]  \cup\left[
\bar{\tau}_{n},\hat{\kappa}_{n}\right]  .
\]
Simple calculations show, ($\left\vert \kappa_{n}\right\vert $ denotes the
length of $\kappa_{n}$)
\begin{align*}
\theta_{n}  &  =\frac{1}{\tau_{n}}\int_{0}^{\tau_{n}}\tilde{\varpi}_{n}\left(
s\right)  ds\\
&  =\frac{1}{\tau_{n}}\left(  \int_{\kappa_{n}}\tilde{\varpi}_{n}\left(
s\right)  ds+\int_{\left.  \left(  0,\tau_{n}\right)  \right\backslash
\kappa_{n}}\tilde{\varpi}_{n}\left(  s\right)  ds\right) \\
&  \leqslant C\frac{\left\vert \kappa_{n}\right\vert }{\tau_{n}}+\left(
\theta_{n}-\varepsilon\right)  \left(  1-\frac{\left\vert \kappa
_{n}\right\vert }{\tau_{n}}\right)  ,
\end{align*}
this implies%
\[
\liminf\limits_{n\rightarrow\infty}\frac{\left\vert \kappa_{n}\right\vert
}{\tau_{n}}>0.
\]
Since otherwise%
\[
\liminf\limits_{n\rightarrow\infty}\frac{\left\vert \kappa_{n}\right\vert
}{\tau_{n}}=0
\]
would lead to the contradictory inequality $0\leqslant-\varepsilon$.

However, $\tau_{n}\rightarrow\infty$. Hence $\left\vert \kappa_{n}\right\vert
\rightarrow\infty$ and either the length of $\left[  \check{\kappa}_{n}%
,\bar{\tau}_{n}\right]  $ or $\left[  \bar{\tau}_{n},\hat{\kappa}_{n}\right]
$ tends to infinity as $n\rightarrow\infty$. Without no loss of generality,
suppose%
\begin{equation}
\bar{\tau}_{n}-\check{\kappa}_{n}\rightarrow\infty. \label{assumption}%
\end{equation}
Define%
\begin{equation}
\tilde{\upsilon}_{n}\left(  t\right)  =\tilde{\varpi}_{n}\left(
t+\check{\kappa}_{n}\right)  , \label{v_n}%
\end{equation}
Then $\left\{  \tilde{\upsilon}_{n}\right\}  $ is locally bounded in
$W^{4,2}\left(  \mathbb{R}\right)  $ by boundedness $\left(  \ref{bd_ww2}%
\right)  $ of $\left\{  \tilde{\varpi}_{n}\right\}  $. Furthermore, since
$\tilde{\varpi}_{n}\in\tilde{\Xi}\left(  f_{\lambda_{n}}\right)  $, we have,
for any bounded interval $\left[  t_{1},t_{2}\right]  ,$%
\[
J_{t_{2}-t_{1}}\left(  f_{\lambda_{n}},\tilde{\varpi}_{n}\right)
=\zeta_{t_{2}-t_{1}}\left(  f_{\lambda_{n}},\mathcal{V}_{\tilde{\varpi}_{n}%
}\left(  t_{1}\right)  ,\mathcal{V}_{\tilde{\varpi}_{n}}\left(  t_{2}\right)
\right)  .
\]
Whence $\tilde{\upsilon}_{n}\left(  t\right)  $ must satisfy the minimal
relation%
\begin{equation}
J_{t_{2}-t_{1}}\left(  f_{\lambda_{n}},\tilde{\upsilon}_{n}\right)
=\zeta_{t_{2}-t_{1}}\left(  f_{\lambda_{n}},\mathcal{V}_{\tilde{\upsilon}_{n}%
}\left(  t_{1}\right)  ,\mathcal{V}_{\tilde{\upsilon}_{n}}\left(
t_{2}\right)  \right)  . \label{optimal}%
\end{equation}
Since $\left\{  \tilde{\upsilon}_{n}\right\}  $ is bounded in $W^{4,2}\left(
\left[  t_{1},t_{2}\right]  \right)  $, we may suppose $\tilde{\upsilon}_{n}$
converges weakly in $W^{4,2}\left(  \left[  t_{1},t_{2}\right]  \right)  $ to
some element $\tilde{\upsilon}^{\ast}\in W^{4,2}\left(  \left[  t_{1}%
,t_{2}\right]  \right)  .$ Note the definition of $\tilde{\upsilon}_{n}$
relies on $\varepsilon$, to emphasize this dependence, we will write
$\tilde{\upsilon}^{\ast}$ as $\tilde{\upsilon}_{\varepsilon}^{\ast}$. By
Sobolev (compact) imbedding theorem,%
\begin{equation}
W^{4,2}\left(  \left[  t_{1},t_{2}\right]  \right)  \hookrightarrow
C^{3}\left(  \left[  t_{1},t_{2}\right]  \right)  . \label{imbd}%
\end{equation}
That is, $\tilde{\upsilon}_{n}\rightarrow\tilde{\upsilon}^{\ast}$ in
$C^{3}\left(  \left[  t_{1},t_{2}\right]  \right)  $. Sending $n$ in $\left(
\ref{optimal}\right)  $ to infinity, we obtain%
\[
J_{t_{2}-t_{1}}\left(  f_{\lambda^{\ast}};\tilde{\upsilon}^{\ast}\right)
=\zeta_{t_{2}-t_{1}}\left(  f_{\lambda^{\ast}},\mathcal{V}_{\tilde{\upsilon
}^{\ast}}\left(  t_{1}\right)  ,\mathcal{V}_{\tilde{\upsilon}^{\ast}}\left(
t_{2}\right)  \right)  ,
\]
therefore, proposition \ref{good} implies that $\tilde{\upsilon}_{\varepsilon
}^{\ast}\in\tilde{\Xi}\left(  f_{\lambda^{\ast}}\right)  .$

For any $s>0,$ we have $s\subset\left[  0,\bar{\tau}_{n}-\check{\kappa}%
_{n}\right]  $ as long as $n$ is large enough. Since $\tilde{\upsilon}_{n}$ is
strictly increasing on $\left[  0,\bar{\tau}_{n}-\check{\kappa}_{n}\right]  $,
one must have, by $\left(  \ref{imbd}\right)  $,
\[
\tilde{\upsilon}_{\varepsilon}^{\ast}\left(  s\right)  =\lim_{n\rightarrow
\infty}\tilde{\upsilon}_{n}\left(  s\right)  \geqslant\lim_{n\rightarrow
\infty}\left(  \theta_{n}-\varepsilon\right)  =-\varepsilon,
\]
hence%
\[
\liminf\limits_{s\rightarrow\infty}\tilde{\upsilon}_{\varepsilon}^{\ast
}\left(  s\right)  \geqslant-\varepsilon
\]
We may take this inequality one step further by showing that the limit
$\lim\limits_{s\rightarrow\infty}\tilde{\upsilon}_{\varepsilon}^{\ast}\left(
s\right)  $ actually exists and $\varrho\geqslant-\varepsilon.$

Note $\tilde{\upsilon}_{\varepsilon}^{\ast}\in\tilde{\Xi}\left(
f_{\lambda^{\ast}}\right)  $ implies it is bounded on the real line. Moreover,
$\tilde{\upsilon}_{\varepsilon}^{\ast}\left(  s\right)  =\lim_{n\rightarrow
\infty}\tilde{\upsilon}_{n}\left(  s\right)  $ is non-decreasing on $\left[
0,\bar{\tau}_{n}-\check{\kappa}_{n}\right]  $, so $\lim_{s\rightarrow\infty
}\tilde{\upsilon}_{\varepsilon}^{\ast}\left(  s\right)  $ exists and it is
denoted by $\varrho$,%
\begin{equation}
\varrho=\liminf\limits_{s\rightarrow\infty}\tilde{\upsilon}_{\varepsilon
}^{\ast}\left(  s\right)  \geqslant-\varepsilon. \label{limit}%
\end{equation}

\end{proof}

\begin{lemma}
The function $\tilde{\upsilon}_{\varepsilon}^{\ast}$ is differentiable and
$\lim\limits_{s\rightarrow+\infty}\tilde{\upsilon}_{\varepsilon}^{\ast\prime
}\left(  s\right)  =0.$
\end{lemma}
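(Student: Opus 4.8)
The plan is to exploit the fact, established in the previous lemma, that $\tilde{\upsilon}_{\varepsilon}^{\ast}\in\tilde{\Xi}\left(f_{\lambda^{\ast}}\right)$ and that $\tilde{\upsilon}_{\varepsilon}^{\ast}\left(s\right)$ is non-decreasing on $\left[0,\infty\right)$ with a finite limit $\varrho$. Differentiability on $\left(0,\infty\right)$ is already available: each $\tilde{\varpi}_{n}$ solves the Euler--Lagrange equation $\tilde{\varpi}_{n}^{\prime\prime\prime\prime}+\tilde{\varpi}_{n}^{\prime\prime}+\tilde{\varpi}_{n}+h^{\prime}\left(\tilde{\varpi}_{n}\right)=0$, so by Lemma \ref{bd2&4} the shifted functions $\tilde{\upsilon}_{n}$ are locally bounded in $W^{4,2}$, hence (by the compact imbedding $\left(\ref{imbd}\right)$) the weak limit $\tilde{\upsilon}_{\varepsilon}^{\ast}$ is $C^{3}$ on every compact interval and itself satisfies the same fourth-order ODE. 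In particular $\tilde{\upsilon}_{\varepsilon}^{\ast}$ is smooth and its derivative $\tilde{\upsilon}_{\varepsilon}^{\ast\prime}$ exists everywhere.

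The heart of the matter is showing $\tilde{\upsilon}_{\varepsilon}^{\ast\prime}\left(s\right)\to0$ as $s\to+\infty$. First I would record that $\tilde{\upsilon}_{\varepsilon}^{\ast}$, being monotone and bounded above by $\varrho$ on $\left[0,\infty\right)$, has $\int_{0}^{\infty}\tilde{\upsilon}_{\varepsilon}^{\ast\prime}\left(s\right)\,ds=\varrho-\tilde{\upsilon}_{\varepsilon}^{\ast}\left(0\right)<\infty$ with a nonnegative integrand, so $\tilde{\upsilon}_{\varepsilon}^{\ast\prime}$ is integrable on $\left[0,\infty\right)$. Next, because $\tilde{\upsilon}_{\varepsilon}^{\ast}\in\tilde{\Xi}\left(f_{\lambda^{\ast}}\right)$ it possesses Property A, and by the bound $f_{\lambda^{\ast}}\left(x,y,z\right)\ge\tilde{a}_{1}\left|x\right|^{4}-\tilde{a}_{2}\left|y\right|^{2}+\tilde{a}_{3}\left|z\right|^{2}-\tilde{a}_{4}$ from the proof of Theorem \ref{bound_w1}, the quantity $I_{T}\left(f_{\lambda^{\ast}};\tilde{\upsilon}_{\varepsilon}^{\ast}\right)-T\psi_{f_{\lambda^{\ast}}}$ staying bounded forces $\int_{0}^{\infty}\left(\left|\tilde{\upsilon}_{\varepsilon}^{\ast\prime\prime}\right|^{2}+\bigl(\left|\tilde{\upsilon}_{\varepsilon}^{\ast}\right|-c\bigr)^{2}\right)dt$ to have at most linear growth; combined with the known $W^{1,\infty}$ bound this gives enough uniform control that $\tilde{\upsilon}_{\varepsilon}^{\ast\prime}$ together with $\tilde{\upsilon}_{\varepsilon}^{\ast\prime\prime}$ is uniformly continuous on $\left[0,\infty\right)$. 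An integrable, uniformly continuous function on a half-line tends to zero, which yields $\tilde{\upsilon}_{\varepsilon}^{\ast\prime}\left(s\right)\to0$.

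As an alternative (cleaner) route I would instead argue via the $\omega$-limit set and Lemma \ref{perfect_m}: since $\tilde{\upsilon}_{\varepsilon}^{\ast}\left(s\right)\to\varrho$ and $\tilde{\upsilon}_{\varepsilon}^{\ast}$ is a bounded solution of the autonomous ODE, any point $\nu\in\Omega\left(\tilde{\upsilon}_{\varepsilon}^{\ast}\right)$ has first coordinate equal to $\varrho$; applying Lemma \ref{perfect_m} produces a solution $\bar{\varpi}$ in $\tilde{\Xi}\left(f_{\lambda^{\ast}}\right)$ with $\mathcal{V}_{\bar{\varpi}}$-orbit inside $\Omega\left(\tilde{\upsilon}_{\varepsilon}^{\ast}\right)$, so $\bar{\varpi}\equiv\varrho$ is constant, hence $\mathcal{V}_{\bar{\varpi}}=\left(\varrho,0\right)$ and the orbit, and therefore $\Omega\left(\tilde{\upsilon}_{\varepsilon}^{\ast}\right)$, is the single point $\left(\varrho,0\right)$; in particular the second coordinate of $\mathcal{V}_{\tilde{\upsilon}_{\varepsilon}^{\ast}}\left(s\right)$, namely $\tilde{\upsilon}_{\varepsilon}^{\ast\prime}\left(s\right)$, converges to $0$.

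The main obstacle, in either approach, is the passage from ``$\tilde{\upsilon}_{\varepsilon}^{\ast}$ has a finite limit'' to ``its derivative decays'': monotone convergence alone does not force $\tilde{\upsilon}_{\varepsilon}^{\ast\prime}\to0$ without some equicontinuity or structural input. I expect to lean on the higher-order a priori bounds of Lemma \ref{bd2&4} (to control $\tilde{\upsilon}_{\varepsilon}^{\ast\prime\prime}$, hence the modulus of continuity of $\tilde{\upsilon}_{\varepsilon}^{\ast\prime}$) or, more elegantly, on the orbit-rigidity furnished by Lemma \ref{perfect_m}, which reduces the question to identifying the $\omega$-limit set as a single equilibrium. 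The second route is shorter and is the one I would write up, keeping the first as a remark.
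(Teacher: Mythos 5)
Your proposal is correct, but neither of your two routes is the one the paper takes, so a comparison is in order. The paper argues by translation--compactness: it sets $\xi_{n}\left(s\right)=\tilde{\upsilon}_{\varepsilon}^{\ast}\left(s+n\right)$ on a fixed window $\left[0,\sigma\right]$, uses the uniform local $W^{4,2}$ bounds of Lemma \ref{bd2&4} and the imbedding $\left(\ref{imbd}\right)$ to extract $C^{3}$ limits, observes that any such limit must be the constant $\varrho$ because $\tilde{\upsilon}_{\varepsilon}^{\ast}\left(s\right)\rightarrow\varrho$, and then upgrades to $\xi_{n}^{\prime}\rightrightarrows0$ by an equicontinuity/contradiction argument, which translates back into $\tilde{\upsilon}_{\varepsilon}^{\ast\prime}\left(z\right)\rightarrow0$. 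Your first route (monotonicity gives $\tilde{\upsilon}_{\varepsilon}^{\ast\prime}\geqslant0$ integrable on $\left[0,\infty\right)$, uniform local $W^{4,2}$ control gives a uniform modulus of continuity for $\tilde{\upsilon}_{\varepsilon}^{\ast\prime}$, and a nonnegative, integrable, uniformly continuous function on a half-line decays) rests on exactly the same a priori bounds but replaces the compactness-of-translates step by a Barbalat-type argument; it is shorter and avoids the subsequence bookkeeping, and in fact the detour through Property A and the coercivity of $f_{\lambda^{\ast}}$ that you sketch for the uniform continuity is unnecessary, since Lemma \ref{bd2&4} already gives the window-uniform $W^{4,2}$ bound directly. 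Your second route is genuinely different in spirit: it identifies $\Omega\left(\tilde{\upsilon}_{\varepsilon}^{\ast}\right)$ as the single equilibrium $\left(\varrho,0\right)$ by running Lemma \ref{perfect_m} through an arbitrary point $\varsigma\in\Omega$ (the resulting trajectory has constant first coordinate $\varrho$, hence vanishing second coordinate, hence $\varsigma=\left(\varrho,0\right)$), and then uses precompactness of the bounded orbit $\mathcal{V}_{\tilde{\upsilon}_{\varepsilon}^{\ast}}\left(s\right)$ to conclude convergence of the whole vector to $\left(\varrho,0\right)$. This is legitimate (Remark \ref{good} supplies Property A for $\tilde{\upsilon}_{\varepsilon}^{\ast}\in\tilde{\Xi}\left(f_{\lambda^{\ast}}\right)$, so Lemma \ref{perfect_m} applies, and no circularity is incurred), and it has the side benefit of proving $\Omega\left(\tilde{\upsilon}_{\varepsilon}^{\ast}\right)=\left\{\left(\varrho,0\right)\right\}$ outright, a fact the paper only assembles in the following lemma; the cost is that it leans on the comparatively heavy structural Lemma \ref{perfect_m} where the paper and your first route need only Sobolev compactness.
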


\begin{proof}
Take $\sigma\geqslant1,$define%
\[
\xi_{n}\left(  s\right)  =\tilde{\upsilon}_{\varepsilon}^{\ast}\left(
s+n\right)  ,s\in\left[  0,\sigma\right]  .
\]
Since $\tilde{\upsilon}_{\varepsilon}^{\ast}\in\tilde{\Xi}\left(  f\right)  $,
$V_{\xi_{n}}\left(  s\right)  $ is bounded on $\left[  0,\sigma\right]  $. By
the same reasoning as lemma \ref{bd2&4}, we obtain that $\xi_{n}\left(
s\right)  $ is bounded in $W^{4,2}\left(  \left[  0,\sigma\right]  \right)  $,
one may suppose $\xi_{n}\left(  s\right)  $ converges weakly in $W^{4,2}%
\left[  0,\sigma\right]  $ to $\xi\left(  s\right)  ,$ the convergence is also
valid in the sense of $C^{3}\left(  \left[  0,\sigma\right]  \right)  $ by
Sobolev imbedding theorem, therefore%
\[
\xi\left(  s\right)  =\lim_{n\rightarrow\infty}\xi_{n}\left(  s\right)
=\lim_{n\rightarrow\infty}\tilde{\upsilon}_{\varepsilon}^{\ast}\left(
s+n\right)  =\varrho,\forall s\in\left[  0,\sigma\right]  ,
\]
namely, $\xi\left(  s\right)  \equiv\varrho,$ $\forall s\in\left[
0,\sigma\right]  .$

Now it is easy to see that $\xi_{n}^{\prime}\left(  s\right)
\rightrightarrows0,$ $s\in\left[  0,\sigma\right]  .$ For otherwise, there
exist $\delta>0$ and $s_{n_{k}}\in\left[  0,\sigma\right]  ,$%
\begin{equation}
\xi_{n_{k}}^{\prime}\left(  s_{n_{k}}\right)  =\left\vert \xi_{n_{k}}^{\prime
}\left(  s_{n_{k}}\right)  \right\vert \geqslant\delta.
\label{greater_than_delta}%
\end{equation}
Assume with no loss that $s_{n_{k}}\rightarrow\tilde{s}\in\left[
0,\sigma\right]  .$ For any $0\leqslant s_{1}<s_{2}\leqslant\sigma$,%
\begin{align*}
&  \left\vert \xi_{n}^{\prime}\left(  s_{1}\right)  -\xi_{n}^{\prime}\left(
s_{2}\right)  \right\vert \\
&  =\left\vert \int_{s_{1}}^{s_{2}}\xi_{n}^{\prime\prime}\left(  t\right)
dt\right\vert \\
&  \leqslant\int_{s_{1}}^{s_{2}}\left\vert \xi_{n}^{\prime\prime}\left(
t\right)  \right\vert dt\\
&  \leqslant\left(  s_{2}-s_{1}\right)  ^{1-\frac{1}{\gamma}}\left(
\int_{s_{1}}^{s_{2}}\left\vert \xi_{n}^{\prime\prime}\left(  t\right)
\right\vert ^{2}dt\right)  ^{\frac{1}{2}},
\end{align*}
This indicates that $\xi_{n}^{\prime}\left(  s\right)  $ is compact in
$C\left(  \left[  0,\sigma\right]  \right)  $, therefore%
\[
\xi_{n_{k}}^{\prime}\left(  s_{n_{k}}\right)  \rightarrow\xi^{\prime}\left(
\tilde{s}\right)  ,
\]
but $\xi^{\prime}\left(  \tilde{s}\right)  =0,$ which contradicts $\left(
\ref{greater_than_delta}\right)  $.

Now, since $\xi_{n}^{\prime}\left(  s\right)  \rightrightarrows0,$
$s\in\left[  0,\sigma\right]  .$ Hence%
\[
\tilde{\upsilon}_{\varepsilon}^{\ast\prime}\left(  z\right)  =\xi
_{\sigma\left(  z\right)  }^{\prime}\left(  z-\sigma\left(  z\right)  \right)
\rightarrow0,\text{ }z\rightarrow+\infty,
\]
where%
\[
\sigma\left(  z\right)  =\left\lfloor \frac{z}{\sigma}\right\rfloor
=\max\left\{  n\in\mathbb{N}:n\sigma\leqslant z<\left(  n+1\right)
\sigma\right\}  .
\]

\end{proof}

\begin{lemma}
For $\varrho$ defined early, $\psi_{f}\left(  \varrho\right)  =f\left(
\varrho,0,0\right)  .$
\end{lemma}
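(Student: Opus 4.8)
Write the proof in two opposite inequalities, the easy one being $\psi_{f}\left(  \varrho\right)  \leqslant f\left(  \varrho,0,0\right)  $: the constant function $\varpi\equiv\varrho$ lies in $E$ and has mean $\left[  \varpi\right]  =\varrho$, so it is admissible for $\mathbb{P}\left(  f;\varrho\right)  $, and since $J_{T}\left(  f;\varpi\right)  =f\left(  \varrho,0,0\right)  $ for every $T>0$, passing to the infimum gives the bound. For the reverse inequality I would first reduce to a statement about the shifted Lagrangian $f_{\lambda^{\ast}}$. Recall from the earlier step that $\lambda^{\ast}\in\partial\psi_{f}\left(  0\right)  $; hence Lemma \ref{convexity}, relation $(\ref{rel1})$ with $\eta=0$, gives $\psi_{f_{\lambda^{\ast}}}=\psi_{f}\left(  0\right)  $, and the convexity of $\psi_{f}$ together with $\lambda^{\ast}\in\partial\psi_{f}\left(  0\right)  $ gives $\psi_{f}\left(  \varrho\right)  \geqslant\psi_{f}\left(  0\right)  +\lambda^{\ast}\varrho=\psi_{f_{\lambda^{\ast}}}+\lambda^{\ast}\varrho$. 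Thus it suffices to prove $\psi_{f_{\lambda^{\ast}}}=f_{\lambda^{\ast}}\left(  \varrho,0,0\right)  =f\left(  \varrho,0,0\right)  -\lambda^{\ast}\varrho$, for then $\psi_{f}\left(  \varrho\right)  \geqslant f_{\lambda^{\ast}}\left(  \varrho,0,0\right)  +\lambda^{\ast}\varrho=f\left(  \varrho,0,0\right)  $, which closes the chain.

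To obtain $f_{\lambda^{\ast}}\left(  \varrho,0,0\right)  =\psi_{f_{\lambda^{\ast}}}$ I would exploit the asymptotics from the two preceding lemmas: $\tilde{\upsilon}_{\varepsilon}^{\ast}\left(  s\right)  \rightarrow\varrho$ and $\tilde{\upsilon}_{\varepsilon}^{\ast\prime}\left(  s\right)  \rightarrow0$ as $s\rightarrow+\infty$, so $\mathcal{V}_{\tilde{\upsilon}_{\varepsilon}^{\ast}}\left(  s\right)  \rightarrow\left(  \varrho,0\right)  $ and therefore the set of subsequential limits satisfies $\Omega\left(  \tilde{\upsilon}_{\varepsilon}^{\ast}\right)  =\left\{  \left(  \varrho,0\right)  \right\}  $. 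Since $\tilde{\upsilon}_{\varepsilon}^{\ast}\in\tilde{\Xi}\left(  f_{\lambda^{\ast}}\right)  $ is a strongly optimal solution, it satisfies definition \ref{strong_m} (i) for $f_{\lambda^{\ast}}$ and hence possesses Property A by Remark \ref{good}. Applying Lemma \ref{perfect_m} to $f_{\lambda^{\ast}}$ with $\varsigma=\left(  \varrho,0\right)  \in\Omega\left(  \tilde{\upsilon}_{\varepsilon}^{\ast}\right)  $ yields $\bar{\varpi}\in E$ possessing Property B with $\left\{  \mathcal{V}_{\bar{\varpi}}\left(  s\right)  :s\in\mathbb{R}\right\}  \subset\Omega\left(  \tilde{\upsilon}_{\varepsilon}^{\ast}\right)  =\left\{  \left(  \varrho,0\right)  \right\}  $; equivalently $\bar{\varpi}\left(  s\right)  =\varrho$ and $\bar{\varpi}^{\prime}\left(  s\right)  =0$ for all $s$, so $\bar{\varpi}$ is nothing but the constant $\varrho$.

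Finally I would read the conclusion off Property B for this constant $\bar{\varpi}\equiv\varrho$: since $\mathcal{V}_{\bar{\varpi}}\left(  T_{1}\right)  =\mathcal{V}_{\bar{\varpi}}\left(  T_{2}\right)  =\left(  \varrho,0\right)  $, the two $\pi_{f_{\lambda^{\ast}}}$–terms cancel and Property B reduces to $I_{T_{1},T_{2}}\left(  f_{\lambda^{\ast}};\varrho\right)  =\left(  T_{2}-T_{1}\right)  \psi_{f_{\lambda^{\ast}}}$; but the left-hand side equals $\left(  T_{2}-T_{1}\right)  f_{\lambda^{\ast}}\left(  \varrho,0,0\right)  $, so $f_{\lambda^{\ast}}\left(  \varrho,0,0\right)  =\psi_{f_{\lambda^{\ast}}}$, which is exactly the reduction used above, giving $\psi_{f}\left(  \varrho\right)  =f\left(  \varrho,0,0\right)  $. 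One can also bypass Lemma \ref{perfect_m}: by Property A the map $T\mapsto I_{T}\left(  f_{\lambda^{\ast}};\tilde{\upsilon}_{\varepsilon}^{\ast}\right)  -T\psi_{f_{\lambda^{\ast}}}$ is bounded, while the $C^{3}$ convergence of $\tilde{\upsilon}_{\varepsilon}^{\ast}$ on the tails forces $\frac{1}{L}\int_{a}^{a+L}f_{\lambda^{\ast}}\left(  \tilde{\upsilon}_{\varepsilon}^{\ast},\tilde{\upsilon}_{\varepsilon}^{\ast\prime},\tilde{\upsilon}_{\varepsilon}^{\ast\prime\prime}\right)  dt\rightarrow f_{\lambda^{\ast}}\left(  \varrho,0,0\right)  $ as $a\rightarrow+\infty$ for each fixed $L$; telescoping over unit intervals then shows $f_{\lambda^{\ast}}\left(  \varrho,0,0\right)  =\psi_{f_{\lambda^{\ast}}}$, since otherwise the bounded quantity would drift to $\pm\infty$. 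The delicate point is this middle step — converting the merely asymptotic information $\mathcal{V}_{\tilde{\upsilon}_{\varepsilon}^{\ast}}\left(  s\right)  \rightarrow\left(  \varrho,0\right)  $ into an honest admissible competitor, the constant $\varrho$, that realizes the minimal average energy $\psi_{f_{\lambda^{\ast}}}$ on every compact interval; the remainder is routine manipulation of the subdifferential of $\psi_{f}$ and of the decomposition behind Property B.
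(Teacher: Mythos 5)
Your proof is correct and its core is the same as the paper's: you establish $\Omega\left(\tilde{\upsilon}_{\varepsilon}^{\ast}\right)=\left\{\left(\varrho,0\right)\right\}$ from the two preceding lemmas, invoke Property A (via Remark \ref{good}) together with Lemma \ref{perfect_m} to extract the constant competitor $\bar{\varpi}\equiv\varrho$ possessing Property B, and read off $\psi_{f_{\lambda^{\ast}}}=f_{\lambda^{\ast}}\left(\varrho,0,0\right)$ exactly as the paper does. The only divergence is the concluding bookkeeping: the paper passes to $\psi_{f}\left(\varrho\right)=f\left(\varrho,0,0\right)$ through the identity $\psi_{f_{\lambda^{\ast}}}=\psi_{f}\left(\varrho\right)-\lambda^{\ast}\varrho$, which presupposes $\lambda^{\ast}\in\partial\psi_{f}\left(\varrho\right)$ (justified via $\left(\ref{rel_s}\right)$ only in the following lemma), whereas you use only $\lambda^{\ast}\in\partial\psi_{f}\left(0\right)$ --- already established --- plus the trivial upper bound from the constant admissible function; this two-inequality variant is marginally more self-contained but not a genuinely different route.
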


\begin{proof}
By $\left(  \ref{limit}\right)  ,$%
\[
\lim_{T\rightarrow\infty}\frac{1}{T}\int_{0}^{T}\tilde{\upsilon}^{\ast}\left(
s\right)  ds=\lim_{T\rightarrow\infty}\tilde{\upsilon}^{\ast}\left(  T\right)
=\varrho,
\]
but $\tilde{\upsilon}^{\ast}$ is an optimal solution to $\mathbb{P}\left(
f_{\lambda^{\ast}},\varrho\right)  $, and%
\begin{equation}
\psi_{f_{\lambda^{\ast}}}=\psi_{f}\left(  \varrho\right)  -\lambda^{\ast
}\varrho. \label{mm_equ}%
\end{equation}
Hence%
\[
\psi_{f}\left(  \varrho\right)  =f\left(  \varrho,0,0\right)  .
\]
Indeed, since $\tilde{\upsilon}^{\ast}\in\Xi\left(  f_{\lambda^{\ast}}\right)
$, we deduce from remark $\left(  \ref{good}\right)  $ that, there exists a
constant $\tilde{C}>0$ independent of $T$ such that%
\[
\left\vert T\psi_{f_{\lambda^{\ast}}}-I_{T}\left(  f_{\lambda^{\ast}}%
;\tilde{\upsilon}^{\ast}\right)  \right\vert \leqslant\tilde{C},\text{
}\forall T.
\]
Lemma $\left(  \ref{perfect_m}\right)  $ then ensures an element
$\breve{\upsilon}^{\ast}\in E$ possessing Property B and%
\[
\left\{  \left.  \mathcal{V}_{\breve{\upsilon}^{\ast}}\left(  s\right)
\right\vert s\in\mathbb{R}\right\}  \subset\Omega\left(  \tilde{\upsilon
}^{\ast}\right)  =\left\{  \left(  \varrho,0\right)  \right\}  ,
\]
which implies $\breve{\upsilon}^{\ast}\equiv\varrho$ and%
\[
\psi_{f_{\lambda^{\ast}}}=f_{\lambda^{\ast}}\left(  0,0,0\right)  =f\left(
\varrho,0,0\right)  -\lambda^{\ast}\varrho,
\]
combining $\left(  \ref{mm_equ}\right)  $, we arrive at%
\[
\psi_{f}\left(  \varrho\right)  =f\left(  \varrho,0,0\right)  .
\]

\end{proof}

\begin{lemma}
$\varrho\leqslant0.$
\end{lemma}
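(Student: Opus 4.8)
The plan is to argue by contradiction: assume $\varrho>0$ and contradict $\left(\ref{inequ_rcp}\right)$.

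First I would cash in the preceding lemmas. We have $\tilde{\upsilon}^{*}_{\varepsilon}\in\tilde{\Xi}(f_{\lambda^{*}})$ with $\tilde{\upsilon}^{*}_{\varepsilon}(s)\to\varrho$, $\tilde{\upsilon}^{*\prime}_{\varepsilon}(s)\to0$ as $s\to+\infty$, $\psi_{f}(\varrho)=f(\varrho,0,0)$, and $\lambda^{*}\in\partial\psi_{f}(0)$. Since $[\tilde{\upsilon}^{*}_{\varepsilon}]=\varrho$ and $\tilde{\upsilon}^{*}_{\varepsilon}\in\Theta(f_{\lambda^{*}})$, relation $\left(\ref{rel_s}\right)$ yields $\lambda^{*}\in\partial\psi_{f}(\varrho)$; equivalently, comparing $\psi_{f_{\lambda^{*}}}=\psi_{f}(\varrho)-\lambda^{*}\varrho$ with $\psi_{f_{\lambda^{*}}}=\inf_{\xi}\{\psi_{f}(\xi)-\lambda^{*}\xi\}=\psi_{f}(0)$ (the infimum attained at $\xi=0$ because $\lambda^{*}\in\partial\psi_{f}(0)$) gives $\psi_{f}(\varrho)=\psi_{f}(0)+\lambda^{*}\varrho$, so by convexity $\psi_{f}$ is affine with slope $\lambda^{*}$ on $[0,\varrho]$. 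Moreover $\psi_{f}(\varrho)=f(\varrho,0,0)\ge m_{f}>\psi_{f}(0)$, hence $\lambda^{*}\varrho=\psi_{f}(\varrho)-\psi_{f}(0)>0$; as $\varrho>0$ this forces $\lambda^{*}>0$.

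The heart of the proof is to manufacture a second flat ground state of $f_{\lambda^{*}}$ sitting at a value $\le0$. Recall the normalization $\tilde{\varpi}_{n}(0)=\min_{s}\tilde{\varpi}_{n}(s)=m_{n}$, the monotone structure of $\tilde{\varpi}_{n}$ (Lemma \ref{structure_per}), and $\bar{\tau}_{n}-\check{\kappa}_{n}\to\infty$. If (along a subsequence) $\check{\kappa}_{n}\to\infty$, then $\tilde{\upsilon}_{n}(t)=\tilde{\varpi}_{n}(t+\check{\kappa}_{n})$ is defined on intervals exhausting $\mathbb{R}$, so $\tilde{\upsilon}^{*}_{\varepsilon}$ is non-decreasing on all of $\mathbb{R}$; for $t\le0$ one has $t+\check{\kappa}_{n}\in[0,\check{\kappa}_{n}]\subset[0,\bar{\tau}_{n}]$, so $\tilde{\upsilon}_{n}(t)\le\tilde{\varpi}_{n}(\check{\kappa}_{n})=\theta_{n}-\varepsilon$, whence $\varrho^{\prime}:=\lim_{t\to-\infty}\tilde{\upsilon}^{*}_{\varepsilon}(t)\le-\varepsilon<0$, and running the argument of the lemma ``$\psi_{f}(\varrho)=f(\varrho,0,0)$'' at $-\infty$ (Lemma \ref{perfect_m} plus Property B applied to the constant $\equiv\varrho^{\prime}$) gives $\psi_{f}(\varrho^{\prime})=f(\varrho^{\prime},0,0)\ge m_{f}$ and $\lambda^{*}\in\partial\psi_{f}(\varrho^{\prime})$. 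If instead $\check{\kappa}_{n}$ stays bounded, the period splits, up to bounded pieces, as a rising block $[\check{\kappa}_{n},\bar{\tau}_{n}]$ plus a falling block $[\bar{\tau}_{n},\tau_{n}]$; on the rising block $\tilde{\varpi}_{n}$ is increasing with $\tilde{\varpi}_{n}\ge\varrho-\delta$ off an initial piece of bounded length (monotonicity and $\tilde{\upsilon}^{*}_{\varepsilon}\to\varrho$), so if the falling block had length $o(\tau_{n})$ then $\theta_{n}=[\tilde{\varpi}_{n}]_{\tau_{n}}\ge\varrho-\delta-o(1)$, forcing $\varrho\le0$, a contradiction. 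Hence the falling block also carries a positive fraction of $\tau_{n}$; extracting a weak-$W^{4,2}$, strong-$C^{3}$ limit of $\tilde{\varpi}_{n}$ shifted into that block and using Lemma \ref{structure_per} and Property B produces a flat limiting value $\varrho^{\prime}$ with $\psi_{f}(\varrho^{\prime})=f(\varrho^{\prime},0,0)\ge m_{f}$, $\lambda^{*}\in\partial\psi_{f}(\varrho^{\prime})$, and necessarily $\varrho^{\prime}\le0$ — for if $\varrho^{\prime}>0$ then $\tilde{\varpi}_{n}\ge\min(\varrho,\varrho^{\prime})-\delta>0$ off pieces of total length $o(\tau_{n})$, whence $\theta_{n}\ge\min(\varrho,\varrho^{\prime})-\delta-o(1)>0$, contradicting $\theta_{n}\to0$.

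Finally, $\lambda^{*}$ lies in $\partial\psi_{f}$ at both $\varrho^{\prime}\le0$ and $\varrho>0$, so by convexity $\psi_{f}$ is affine with slope $\lambda^{*}$ on $[\varrho^{\prime},\varrho]$, an interval containing $0$; hence $\psi_{f}(0)$ is a convex combination of $\psi_{f}(\varrho^{\prime})=f(\varrho^{\prime},0,0)\ge m_{f}$ and $\psi_{f}(\varrho)=f(\varrho,0,0)\ge m_{f}$, so $\psi_{f}(0)\ge m_{f}$, contradicting $\left(\ref{inequ_rcp}\right)$. Therefore $\varrho\le0$. I expect the main obstacle to be exactly the extraction of the second flat ground state when the shifts $\check{\kappa}_{n}$ stay bounded: one must keep careful track of which of the four blocks $[0,\check{\kappa}_{n}]$, $[\check{\kappa}_{n},\bar{\tau}_{n}]$, $[\bar{\tau}_{n},\hat{\kappa}_{n}]$, $[\hat{\kappa}_{n},\tau_{n}]$ carries a positive fraction of the diverging period, and in each subcase combine the compactness and Property A/B machinery already developed with the averaging identity $[\tilde{\varpi}_{n}]_{\tau_{n}}=\theta_{n}\to0$ to pin the new plateau value down to a nonpositive number.
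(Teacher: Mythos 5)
Your first paragraph already contains, in essence, the paper's entire proof: from $\lambda^{\ast}\in\partial\psi_{f}(0)\cap\partial\psi_{f}(\varrho)$ you correctly deduce that $\psi_{f}$ is affine with slope $\lambda^{\ast}$ on $[0,\varrho]$. Now recall how the section opened: the $\theta_{n}$ are \emph{exposed} points of $\psi_{f}$ with $\theta_{n}$ positive and decreasing to $0$. If $\varrho>0$, then for large $n$ the point $\theta_{n}$ lies in the interior of $(0,\varrho)$, i.e.\ in the relative interior of a segment on which $\psi_{f}$ is affine, so $\theta_{n}$ cannot be exposed (nor even extreme). That is the contradiction, and it is exactly what the paper's terse argument (``$\lambda_{n}$ non-increasing and tending to $\lambda^{\ast}$, $\lambda^{\ast}\in\partial\psi_{f}(0)$'', together with $\lambda^{\ast}\in\partial\psi_{f}(\varrho)$) amounts to once the monotonicity of the subdifferential is unwound. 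You could and should have stopped there.

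The route you actually pursue --- manufacturing a second plateau $\varrho'\leqslant0$ and then averaging --- has a genuine gap precisely where you flag it. In the bounded-$\check{\kappa}_{n}$ case, the claim that $\tilde{\varpi}_{n}\geqslant\min(\varrho,\varrho')-\delta$ off pieces of total length $o(\tau_{n})$ does not follow from what you have established: the $C^{3}$ convergence of the shifted functions controls $\tilde{\varpi}_{n}$ only on intervals of \emph{bounded} length around the shift points, whereas the falling block has length comparable to $\tau_{n}$; since $\tilde{\varpi}_{n}$ is decreasing there, once it drops below $\varrho'-\delta$ it may stay below for a positive fraction of the period, and your averaging bound collapses. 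Pinning the new plateau down to a nonpositive value would require iterating the extraction (re-shifting into the sub-block where $\tilde{\varpi}_{n}\leqslant\varrho'-\delta$), together with mirror-image versions of Lemma \ref{perfect_m} and of the limit-at-infinity arguments for decreasing branches and for $s\rightarrow-\infty$, none of which is set up in the paper. All of this machinery is avoidable via the exposed-point observation above.
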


\begin{proof}
Previous arguments show $\tilde{\upsilon}^{\ast}\in\Theta\left(
f,\varrho\right)  \cap\Theta\left(  f_{\lambda^{\ast}}\right)  ,$ it can be
deduced from $\left(  \ref{rel_s}\right)  $ that $\lambda^{\ast}\in
\partial\psi_{f}\left(  \varrho\right)  .$ Since $\lambda_{n}$ is
non-increasing and tends to $\lambda^{\ast}$, and $\lambda^{\ast}\in
\partial\psi_{f}\left(  0\right)  $, then we should have $\varrho\leqslant0.$
\end{proof}

\begin{lemma}
$\tau_{n}\nrightarrow0.$
\end{lemma}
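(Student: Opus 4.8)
The plan is to argue by contradiction. Suppose, contrary to the claim, that $\tau_{n_k}\to 0$ along some subsequence; keeping the notation $\{\tilde\varpi_n\}$ for that subsequence, we still have $\theta_n\to 0$, $\lambda_n\to\lambda^\ast$, and $\psi_{f_{\lambda_n}}=\psi_f(\theta_n)-\lambda_n\theta_n$ bounded (since $\psi_f$ is continuous, $\psi_f(\theta_n)\to\psi_f(0)$, while $\lambda_n$ is bounded and $\theta_n\to0$). Hence Theorem \ref{bound_w1} applies and gives a constant $C>0$, independent of $n$, with $\|\tilde\varpi_n\|_{W^{1,\infty}(\mathbb R)}\le C$. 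Since $\tilde\varpi_n$ is $\tau_n$-periodic and $|\tilde\varpi_n'|\le C$ everywhere, the oscillation of $\tilde\varpi_n$ over one period is at most $C\tau_n/2$; because the period-mean of $\tilde\varpi_n$ equals the constraint value $[\tilde\varpi_n]=\theta_n$, this forces
\[
a_n:=\|\tilde\varpi_n\|_{L^\infty(\mathbb R)}\le|\theta_n|+\tfrac{C}{2}\tau_n\longrightarrow 0 .
\]

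Next I estimate the per-period energy from below. Writing $f(x,y,z)=\tfrac{\xi^2}{2}z^2-\xi^2y^2+f(x,0,0)$ and integrating by parts over one period (using periodicity to kill the boundary terms), one has $-\xi^2\int_0^{\tau_n}(\tilde\varpi_n')^2\,dt=\xi^2\int_0^{\tau_n}\tilde\varpi_n\tilde\varpi_n''\,dt$, so that
\[
\int_0^{\tau_n} f(\tilde\varpi_n,\tilde\varpi_n',\tilde\varpi_n'')\,dt=\frac{\xi^2}{2}\int_0^{\tau_n}(\tilde\varpi_n'')^2\,dt+\xi^2\int_0^{\tau_n}\tilde\varpi_n\tilde\varpi_n''\,dt+\int_0^{\tau_n} f(\tilde\varpi_n,0,0)\,dt .
\]
Using $\int_0^{\tau_n}\tilde\varpi_n\tilde\varpi_n''\,dt\ge-a_n\sqrt{\tau_n}\,\bigl(\int_0^{\tau_n}(\tilde\varpi_n'')^2\bigr)^{1/2}$ (Cauchy--Schwarz) and $f(\cdot,0,0)\ge m_f$, then dividing by $\tau_n$ and setting $s_n:=\bigl(\tfrac1{\tau_n}\int_0^{\tau_n}(\tilde\varpi_n'')^2\bigr)^{1/2}\ge 0$, I obtain, since $\tilde\varpi_n$ is an optimal periodic solution of $\mathbb P(f;\theta_n)$ (hence realizes $\psi_f(\theta_n)$ as the mean of $f$ over one of its periods, with finite such energy so $s_n<\infty$),
\[
\psi_f(\theta_n)=\frac1{\tau_n}\int_0^{\tau_n} f(\tilde\varpi_n,\tilde\varpi_n',\tilde\varpi_n'')\,dt\ge\frac{\xi^2}{2}s_n^2-\xi^2a_ns_n+m_f\ge m_f-\frac{\xi^2}{2}a_n^2 ,
\]
the last inequality by completing the square in $s_n$.

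Finally I let $n\to\infty$: the left-hand side tends to $\psi_f(0)$ by continuity of the convex function $\psi_f$, while $a_n\to 0$, so the right-hand side tends to $m_f$; thus $\psi_f(0)\ge m_f$, contradicting hypothesis $(\ref{inequ_rcp})$. Therefore no subsequence of $\{\tau_n\}$ tends to $0$, in particular $\tau_n\nrightarrow 0$. I expect the only delicate points to be bookkeeping ones: verifying that passing to the subsequence preserves the hypotheses of Theorem \ref{bound_w1}, and that the oscillation-plus-mean bound really drives $a_n\to0$. The single conceptual idea is that a vanishing period forces $\tilde\varpi_n$ to be uniformly close to the constant $0$, so its average energy cannot dip below the pointwise floor $m_f$ of $f(\cdot,0,0)$; the rest is one integration by parts and an elementary quadratic minimization.
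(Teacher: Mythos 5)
Your proof is correct, and it takes a genuinely different route from the paper's. The paper argues via compactness: using the local $W^{4,2}$ bound of Lemma \ref{bd2&4} and the Sobolev imbedding, it extracts a uniform $C^{3}$ limit $\tilde{\varpi}$, uses the mean value theorem on $\frac{1}{\tau_{n}}\int_{t}^{t+\tau_{n}}\tilde{\varpi}_{n}=\theta_{n}$ with $\tau_{n}\rightarrow0$ to force $\tilde{\varpi}\equiv0$ together with its derivatives, and then passes to the limit in the per-period average energy to get $\psi_{f}\left(  0\right)  =f\left(  0,0,0\right)  $, contradicting $\left(  \ref{inequ_rcp}\right)  $. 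You instead use only the $W^{1,\infty}$ bound of Theorem \ref{bound_w1} (whose hypotheses you correctly verify: $\lambda_{n}$ bounded and $\psi_{f_{\lambda_{n}}}=\psi_{f}\left(  \theta_{n}\right)  -\lambda_{n}\theta_{n}$ bounded by continuity of the convex function $\psi_{f}$), observe that a $\tau_{n}$-periodic function with bounded derivative and mean $\theta_{n}$ has sup-norm at most $\left\vert \theta_{n}\right\vert +\frac{C}{2}\tau_{n}\rightarrow0$, and then obtain the quantitative lower bound $\psi_{f}\left(  \theta_{n}\right)  \geqslant m_{f}-\frac{\xi^{2}}{2}a_{n}^{2}$ by one integration by parts (periodicity killing the boundary terms), Cauchy--Schwarz, and completing the square; the limit gives $\psi_{f}\left(  0\right)  \geqslant m_{f}$, again contradicting $\left(  \ref{inequ_rcp}\right)  $. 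Your argument is more elementary and self-contained --- it avoids the fourth-order estimate, the Euler--Lagrange equation, and any compactness extraction --- and it yields an explicit rate; what it gives up is reusability: the paper's compactness machinery is the same one deployed in the surrounding lemmas and in the final proof of Theorem \ref{RCP}, so the paper's route costs nothing extra in context. Both contradictions are legitimate since $f\left(  0,0,0\right)  \geqslant m_{f}$; yours hits the hypothesis $\psi_{f}\left(  0\right)  <m_{f}$ directly.
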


\begin{proof}
By $\left(  \ref{bd_ww4}\right)  $ and Sobolev imbedding theorem, one may
suppose that%
\[
\tilde{\varpi}_{n}\left(  t\right)  \rightrightarrows\tilde{\varpi}\left(
t\right)  ,\tilde{\varpi}_{n}^{\prime}\left(  t\right)  \rightrightarrows
\tilde{\varpi}^{\prime}\left(  t\right)  ,\tilde{\varpi}_{n}^{\prime\prime
}\left(  t\right)  \rightrightarrows\tilde{\varpi}^{\prime\prime}\left(
t\right)  .
\]
Assume $\tau_{n}\rightarrow0,$ take any $t\in\mathbb{R},$ note that%
\[
\frac{1}{\tau_{n}}\int_{t}^{t+\tau_{n}}\tilde{\varpi}_{n}\left(  s\right)
=\theta_{n},
\]
an application of mean value theorem gives,%
\[
\tilde{\varpi}_{n}\left(  \varsigma_{n}\right)  =\theta_{n},\varsigma_{n}%
\in\left(  t,t+\tau_{n}\right)  ,
\]
sending $n\rightarrow\infty,$ we have%
\[
\tilde{\varpi}_{n}\left(  t\right)  \rightrightarrows\tilde{\varpi}\left(
t\right)  =0,\forall t\in\mathbb{R}.
\]
however, $\tilde{\varpi}_{n}$ is an optimal solution, hence%
\[
\psi_{f}\left(  \theta_{n}\right)  =\frac{1}{\tau_{n}}\int_{0}^{\tau_{n}%
}f\left(  \tilde{\varpi}_{n},\tilde{\varpi}_{n}^{\prime},\tilde{\varpi}%
_{n}^{\prime\prime}\right)  \rightarrow f\left(  0,0,0\right)  ,
\]
whence we obtain by sending $n\rightarrow\infty,$%
\[
\psi_{f}\left(  0\right)  =f\left(  0,0,0\right)  ,
\]
which contradicts $\left(  \ref{inequ_rcp}\right)  $.
\end{proof}

\begin{proof}
[Final proof of the main theorem]The proceeding lemmas show that, if $\tau
_{n}\rightarrow\infty$, then for any $\varepsilon>0,$ there is $-\varepsilon
\leqslant\varrho\leqslant0,$ such that $\psi_{f}\left(  \varrho\right)
=f\left(  \varrho,0,0\right)  ,$ this is an obvious contradiction to $\left(
\ref{inequ_rcp}\right)  $. Hence $\tau_{n}$ is bounded and has at least one
limit point $\tau^{\ast}\in\left(  0,\infty\right)  .$ Assume without loss of
generality $\tau_{n}\rightarrow\tau^{\ast}$. Since $\varpi_{n}$ is bounded in
$W_{loc}^{4,2}\left(  \mathbb{R}\right)  $, there is a subsequence%
\[
\varpi_{n_{k}}\rightharpoonup\varpi^{\ast}\text{ }(W_{loc}^{4,2}\left(
\mathbb{R}\right)  ),
\]
then $\varpi_{n_{k}}$ also converges uniformly on compacts, hence
$\varpi^{\ast}$ is a periodic function, denote its period by $\tau^{\ast}.$
Therefore $\varpi^{\ast}$ must be a solution to the minimization problem
$\mathbb{P}\left(  f_{\lambda^{\ast}}\right)  $ with%
\[
\int_{0}^{\tau^{\ast}}\varpi^{\ast}\left(  s\right)  ds=0,
\]
whence $\varpi^{\ast}\left(  s\right)  $ is also optimal to $\mathbb{P}\left(
f,0\right)  $, $\ $it is not trivial by $\left(  \ref{inequ_rcp}\right)  $.
\end{proof}

\begin{acknowledgement}
In the end, special thanks go to professor K.C.Chang, part of the proof is
based on his advice. Thank him for the helpful discussions.
\end{acknowledgement}

%\bibliographystyle{plain}
%\bibliography{articles}

\end{document}